\def\paperversion{DOC}
\def\anonymoussubmission{1} 
\newif\ifsinglecolumn\singlecolumnfalse
\newif\ifwidemargins\widemarginsfalse
\newif\ifwarning\warningfalse
\newif\ifshowcomments\showcommentsfalse
\newif\ifblinded\blindedfalse
\newif\ifreport\reportfalse
\newif\ifcopyrightspace\copyrightspacefalse
\newif\ifacknowledgments\acknowledgmentsfalse
\newif\ifshowpagenumbers\showpagenumberstrue
\newif\iffinalformat\finalformatfalse
\newif\ifweb\webfalse
\def\xxversion{\csname xx\paperversion\endcsname}
\newif\ifsawversion\sawversionfalse
\let\xxversion=\undefined
\def\correctabovecaptionskip{1mm}
\def\correctbelowcaptionskip{-6mm}
\definecolor{javared}{rgb}{0.6,0,0} 
\definecolor{javagreen}{rgb}{0.25,0.5,0.35} 
\definecolor{javapurple}{rgb}{0.5,0,0.35} 
\definecolor{javadocblue}{rgb}{0.25,0.35,0.75} 
\crefname{section}{{\S\!}}{{{\S}{\S}}} 
\Crefname{section}{{Section}}{{Sections}}
\crefname{equation}{{Eqn.}}{{Eqns.}}
\Crefname{equation}{{Eqn.}}{{Eqns.}}
\crefname{theorem}{{Thm.}}{{Thms.}}
\Crefname{theorem}{{Thm.}}{{Thms.}}
\crefname{lemma}{{Lemma}}{{Lemmas}}
\Crefname{lemma}{{Lemma}}{{Lemmas}}
\crefname{figure}{{Fig.}}{{Figures}}
\Crefname{figure}{{Fig.}}{{Figures}}
\crefname{definition}{{Def.}}{{Definitions}}
\Crefname{definition}{{Def.}}{{Definitions}}
\def \correctbelowcaptionskip{\belowcaptionskip}
\title{Safe Serializable Secure Scheduling:\\
Transactions and the Trade-Off Between Security and Consistency\\
    {\Large (Technical Report)}
  }
  \title{Safe Serializable Secure Scheduling:}
    \author{
        Anonymized For Peer Review
     }
    \author{
        Isaac Sheff \qquad Tom Magrino \qquad Jed Liu \qquad
        Andrew C. Myers \qquad Robbert van Renesse \vspace{2mm}\\
        Department of Computer Science,\ \ \ 
        Cornell University, \ \ \ 
        Ithaca, New York, USA \\
        \normalsize\texttt{\{isheff,tmagrino,liujed,andru,rvr\}@cs.cornell.edu}
    }
    \author{
        Isaac Sheff\\
        \and
        Tom Magrino\\
        \and
        Jed Liu\\
        \and 
        Andrew C. Myers\\
        \and
        Robbert van Renesse\\
        \end{tabular} \\[6pt]
        \begin{tabular}{c}
        \affaddr{Cornell University ~~~ Department of Computer Science}
               \affaddr{~~~ Ithaca, New York, USA}\\
               \email{\normalsize\texttt{\{isheff,tmagrino,liujed,andru,rvr\}@cs.cornell.edu}}
    }
\newcommand{\before}[1][\relax]{{\ensuremath{\ifx\relax#1\relax\else\overset{{#1}}\fi\rightarrowtriangle}}}
\newcommand{\p}[1]{{\left({{#1}}\right)}}
\newcommand{\cb}[1]{{\left\{{{#1}}\right\}}}
\newcommand{\sqb}[1]{{\left[{{#1}}\right]}}
\newcommand{\an}[1]{{\left\langle{{#1}}\right\rangle}}
\newcommand{\tb}[1]{{\textrm{\textbf{{#1}}}}}
\newcommand{\join}[0]{{\ensuremath\sqcup}}
\newcommand{\meet}[0]{{\ensuremath\sqcap}}
\newcommand{\less}[0]{{\ensuremath\sqsubseteq}}
\newcommand{\noinunbo}[1]{\textbf{{#1.}}}
\newtheorem{theorem}{Theorem}
\newtheorem{definition}{Def.}
\newtheorem{lemma}{Lemma}
\newcommand{\ti}[1]{{\emph{{#1}}}}
\newcommand{\relaxedsecurity}{relaxed observational determinism\xspace}
\newcommand{\Relaxedsecurity}{Relaxed observational determinism\xspace}
\newcommand{\RelaxedSecurity}{Relaxed Observational Determinism\xspace}
\newcommand{\ProtocolSecurity}[0]{Protocol Security\xspace} 
\newcommand{\ProtocolSecurityAdjective}[0]{secure\xspace} 
\newcommand{\TransactionSecurity}[0]{Secure Information Flow\xspace} 
\newcommand{\TransactionSecurityAdjective}[0]{in\-form\-ation-flow secure\xspace} 
\newcommand{\Patsy}[0]{Patsy\xspace}
\newcommand{\Attacker}[0]{Mallory\xspace}
\newcommand\NIE{\ensuremath\textit{NIE}}
\newcommand{\AttackerPossessivePronoun}{her} 
\newcommand{\AttackerPronounCapitalized}{She} 
\newcommand{\supplemental}[0]{technical report~\cite{abrtchanTR}}
\newcommand{\twoline}[2]{{\begin{tabular}{c}{{#1}}\\{{#2}}\end{tabular}}}
\begin{document}
\maketitle

\begin{abstract}
Modern applications often operate on data in multiple administrative domains.
In this federated setting, participants may not fully trust each other.
These distributed applications use transactions as a core mechanism for
 ensuring reliability and consistency with persistent data.
However, the coordination mechanisms needed for transactions can both leak
 confidential information and allow unauthorized influence.

%
By implementing a simple attack, we show these side channels can be
 exploited.
However, our focus is on preventing such attacks.
We explore secure scheduling of atomic, serializable transactions in a
 federated setting.
While we prove that no protocol can guarantee security and liveness in all
 settings, we establish conditions for sets of transactions that can safely
 complete under secure scheduling.
Based on these conditions, we introduce \ti{staged commit}, a secure scheduling
 protocol for federated transactions.
This protocol avoids insecure information channels by dividing transactions
 into distinct stages.
We implement a compiler that statically checks code to ensure it meets
 our conditions, and a system that schedules these transactions using the 
 staged commit protocol.
Experiments on this implementation demonstrate that realistic federated 
 transactions can be scheduled securely, atomically, and efficiently.

\end{abstract}

\section{Introduction}
\label{sec:introduction}
Many modern applications are distributed, operating over data from
multiple domains.
Distributed protocols are used by applications to coordinate across
physically separate locations, especially
to maintain data consistency.
However, distributed protocols
can leak confidential information unless carefully designed
 otherwise.

Distributed applications are often structured in terms of \ti{transactions},
 which are atomic groups of operations.
For example, when ordering a book online,
 one or more transactions occur to ensure that the same book is not sold
 twice, and to ensure that the sale of a book and payment transfer happen
 atomically.
Transactions are ubiquitous in modern distributed systems.
Implementations include Google's Spanner~\cite{corbett2013},
 Postgres~\cite{ports2012}, and Microsoft's Azure Storage~\cite{calder2011}.
Common middleware such as Enterprise Java Beans~\cite{java-beans}
 and Microsoft .NET~\cite{NET-transactions} also support
 transactions.

Many such transactions are distributed, involving multiple autonomous
 participants (vendors, banks, etc.). Crucially, these participants may not be equally trusted with all
 data.
Standards such as X/Open XA~\cite{xastandard} aim 
 specifically to facilitate transactions that span multiple systems,
 but none address information leaks inherent to transaction scheduling.

Distributed transaction implementations are often based on the two-phase
 commit protocol (2PC)~\cite{2phase}.
We show that 2PC can create unintentional channels through which private
 information may be leaked, and trusted information may be manipulated.
We expect our results apply to other protocols as well.

There is a fundamental tension between providing strong consistency guarantees 
 in an application and respecting the security requirements of the application's
 trust domains.
This work deepens the understanding of this trade-off and demonstrates 
 that providing both strong consistency and security guarantees, while not 
 always possible, is not a lost cause.

Concretely, we make the following contributions in this paper:
\begin{itemize}
  \item We describe \ti{abort channels}, a new kind of side channel
         through which confidential information can be leaked in transactional
         systems
        (\cref{sec:abortchannels}).
  \item We demonstrate exploitation of abort channels on a distributed system
         (\cref{sec:attackdemo}).
  \item We define an abstract model of distributed systems, transactions, 
         and information flow security (\cref{sec:system}), and
         introduce \ti{\relaxedsecurity}, a noninterference-based
         security model for distributed systems
         (\cref{sec:limited}).
  \item We establish that within this model, it is not possible for any 
         protocol to securely serialize all sets of transactions, even if the 
         transactions are individually secure
         (\cref{sec:impossibility}).
  \item We introduce and prove a sufficient condition for ensuring
         serializable transactions can be securely scheduled
         (\cref{sec:analysis}).
  \item We define the \emph{staged commit} protocol, a novel secure
         scheduling protocol for transactions meeting this condition 
         (\cref{sec:protocols}).
  \item We implement our novel protocol in the Fabric system~\cite{fabric09},
         and extend the Fabric language and compiler to statically ensure
         transactions will be securely scheduled (\cref{sec:implementation}).
  \item We evaluate the expressiveness of the new static checking
         discipline and the runtime overhead of the staged commit
         protocol (\cref{sec:evaluation}).
\end{itemize}
We discuss related work further in \cref{sec:related}, and
conclude in \cref{sec:conclusion}.
\ifreport\else
For brevity, we present proof sketches of the results in this paper;
full proofs can be found in the technical report~\cite{abrtchanTR}.
\fi

\section{Abort Channels}
\label{sec:abortchannels}
Two transactions working with the same data can \ti{conflict} if at least one 
 of them is writing to the data. Typically, this means that one (or
 both) of the transactions has failed and must be \ti{aborted}.
In many transaction protocols, including 2PC, a
participant\footnote{Transaction participants are often processes or network nodes.} involved
in both transactions can abort a failed transaction
by sending an \ti{abort message}
to all other participants in the
failed transaction~\cite{2phase}.
These abort messages can create unintended \ti{abort
 channels}, through which private information can be leaked, and trusted
 information can be manipulated.


An abort message can convey secret information if a participant aborts a 
 transaction otherwise likely to be scheduled, because another participant in 
 the same transaction might deduce something about the aborting participant.
For example, that other participant might guess that the abort is likely caused by the presence 
 of another---possibly secret---conflicting transaction.

Conspirators might deliberately use abort channels to covertly transfer 
 information within a system otherwise believed to be secure.
Although abort channels communicate at most one bit per (attempted) 
 transaction, they could be used as a high-bandwidth covert channel for
 exfiltration of sensitive information.
Current transactional systems can schedule over 100 million
 transactions per second, even at modest system sizes~\cite{dragojevic2015}.
It is difficult to know if abort channels are already being exploited in real 
 systems, but large-scale, multi-user transactional systems such as 
 Spanner~\cite{corbett2013} or Azure Storage~\cite{calder2011} are in
 principle vulnerable.

Abort messages also affect the integrity of transaction scheduling.
An abort typically causes a transaction not to be scheduled.
Even if the system simply retries the transaction until it is scheduled, this
 still permits a participant to control the ordering of transactions, even if
 it has no authority to affect them.
For example, a participant might gain some advantage by ensuring that its own
 transactions always happen after a competitor's.


Transactions can also create channels that leak information based on 
 timing or termination~\cite{atluri2000,Bertino:2001}.
We treat timing and termination channels as outside the scope of this work, to 
 be handled by mechanisms such as timing channel mitigation~\cite{Kopf:Durmuth:CSF2009,azm10,barthe2006}.
Abort channels differ from these previously identified channels in 
 that information leaks via the existence of explicit messages, with
 no reliance on timing other than their ordering.
Timing mitigation does not control abort channels.

\subsection{Rainforest Example}\ifreport \vspace{-3mm}\fi
\label{sec:rainforest}


\begin{figure}
\centering
 \begin{tikzpicture} [scale=1,auto=center]
   \node[inner sep=0pt] (gloriaprime) at (0,2) {};
   \node[inner sep=0pt] (fredprime) at (5,2) {};
   \node[inner sep=0pt] (rainforestprime) at (2.2,2) {};
   \node[inner sep=0pt] (rainforestprime2) at (2.8,2) {};
   \node[inner sep=0pt] (outelprime) at (1.8,0.5) {};
   \node[inner sep=0pt] (bankprime) at (4,0.5) {};
   \node[inner sep=0pt] (outel) at (1,0) 
       {\includegraphics[width=15mm]{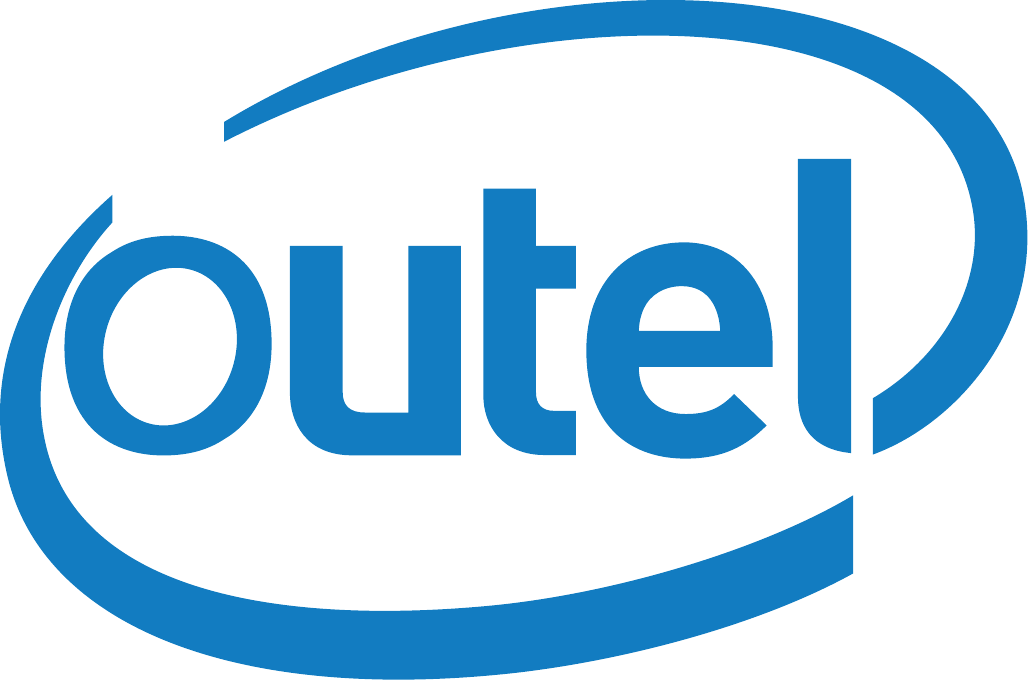}};
   \node[inner sep=0pt] (bank) at (4,0) 
       {\includegraphics[width=15mm]{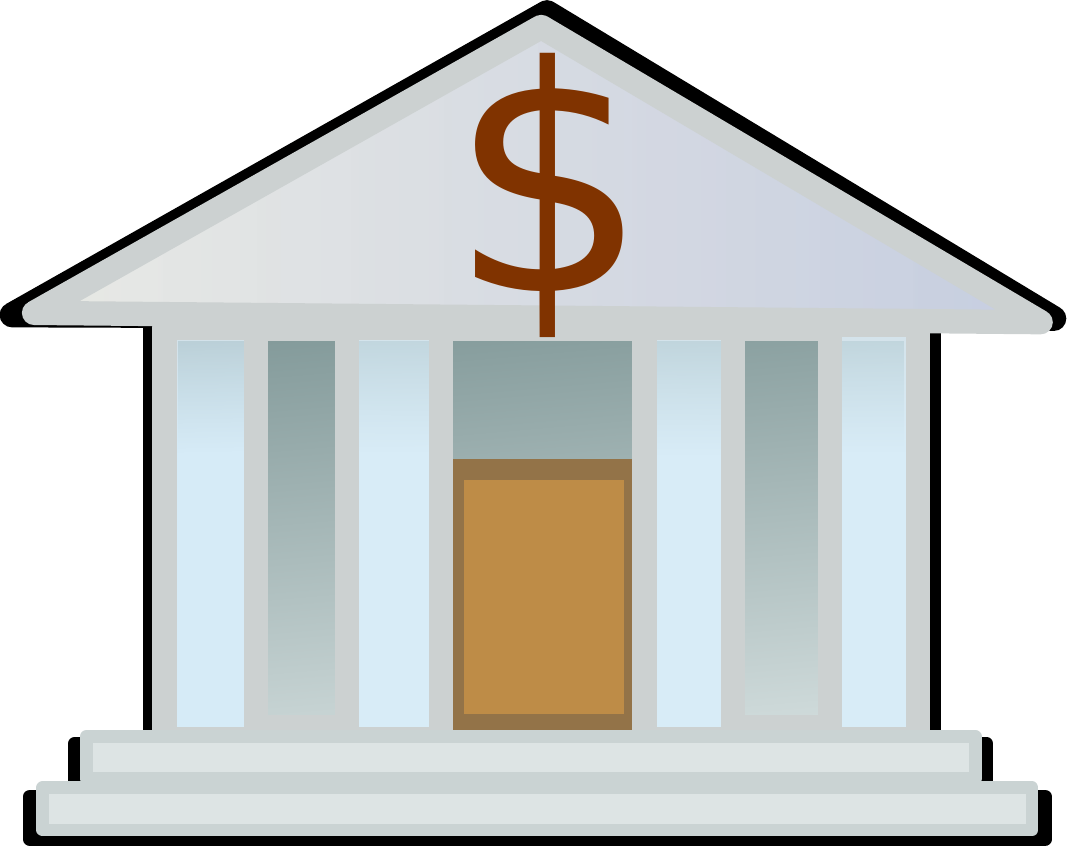}};
   \draw [draw=red,fill=none,dashed,line width=1pt,->,>=open triangle 60]
         (rainforestprime) -- (outel);
   \draw [decoration={text along path,raise=3pt,text={|\small|purchase},
          text align={center}, text color=red},decorate] 
         (outel) -- (rainforestprime);
   \draw [draw=red,fill=none,dashed,line width=1pt,->,>=open triangle 60,label={left:debit account}] 
         (rainforestprime) -- (bank);
   \draw [decoration={text along path,raise=-6pt,text={|\small|\ \ \ \ \ \ \ \ \ debit},
          text align={center}, text color=red},decorate] 
          (rainforestprime) -- (bank);
   \draw [draw=blue,fill=none,line width=1pt,->,>=open triangle 60] 
         (rainforestprime2) -- (outelprime);
   \draw [decoration={text along path,raise=-8pt,text={|\small|purchase},
          text align={left}, text color=blue},decorate] 
         (outelprime) -- (rainforestprime2);
   \draw [draw=blue,fill=none,line width=1pt,->,>=open triangle 60] 
         (rainforestprime2) -- (bankprime);
   \draw [decoration={text along path,raise=3pt,text={|\small|debit},
          text align={center}, text color=blue},decorate] 
         (rainforestprime2) -- (bankprime);
   \node[inner sep=0pt] (rainforest) at (2.5,2) 
   {\begin{tabular}{c}
      Rainforest\\
      \includegraphics[width=15mm]{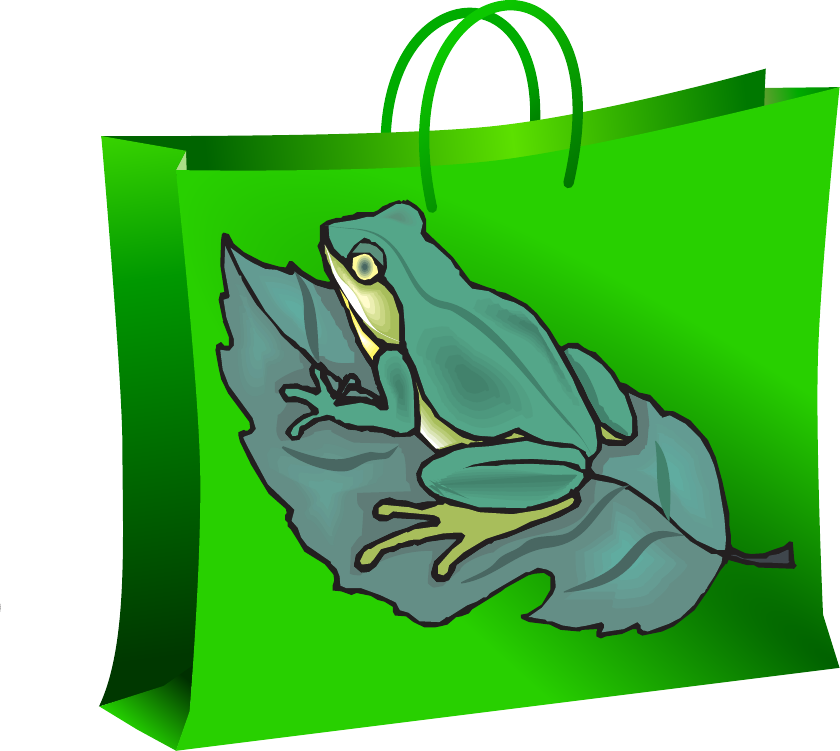}
    \end{tabular}};
   \foreach \from/\to in {gloriaprime/rainforest}  
      \draw [draw=red,fill=none,dashed,line width=1pt,->,>=open triangle 60] 
         (\from) -- (\to);
   \draw [decoration={text along path,raise=3pt,text={|\small|\ \ \ \ order},
          text align={center}, text color=red},decorate] 
         (gloriaprime) -- (rainforest);
   \foreach \from/\to in {fredprime/rainforest}  
      \draw [draw=blue,fill=none,line width=1pt,->,>=open triangle 60] 
         (\from) -- (\to);
   \draw [decoration={text along path,raise=3pt,text={|\small|\ \ \ \ order},
          text align={center}, text color=blue},decorate] 
         (rainforest) -- (fredprime);
   \node[inner sep=0pt] (gloria) at (0,2) 
   {\begin{tabular}{c}
      \includegraphics[width=15mm]{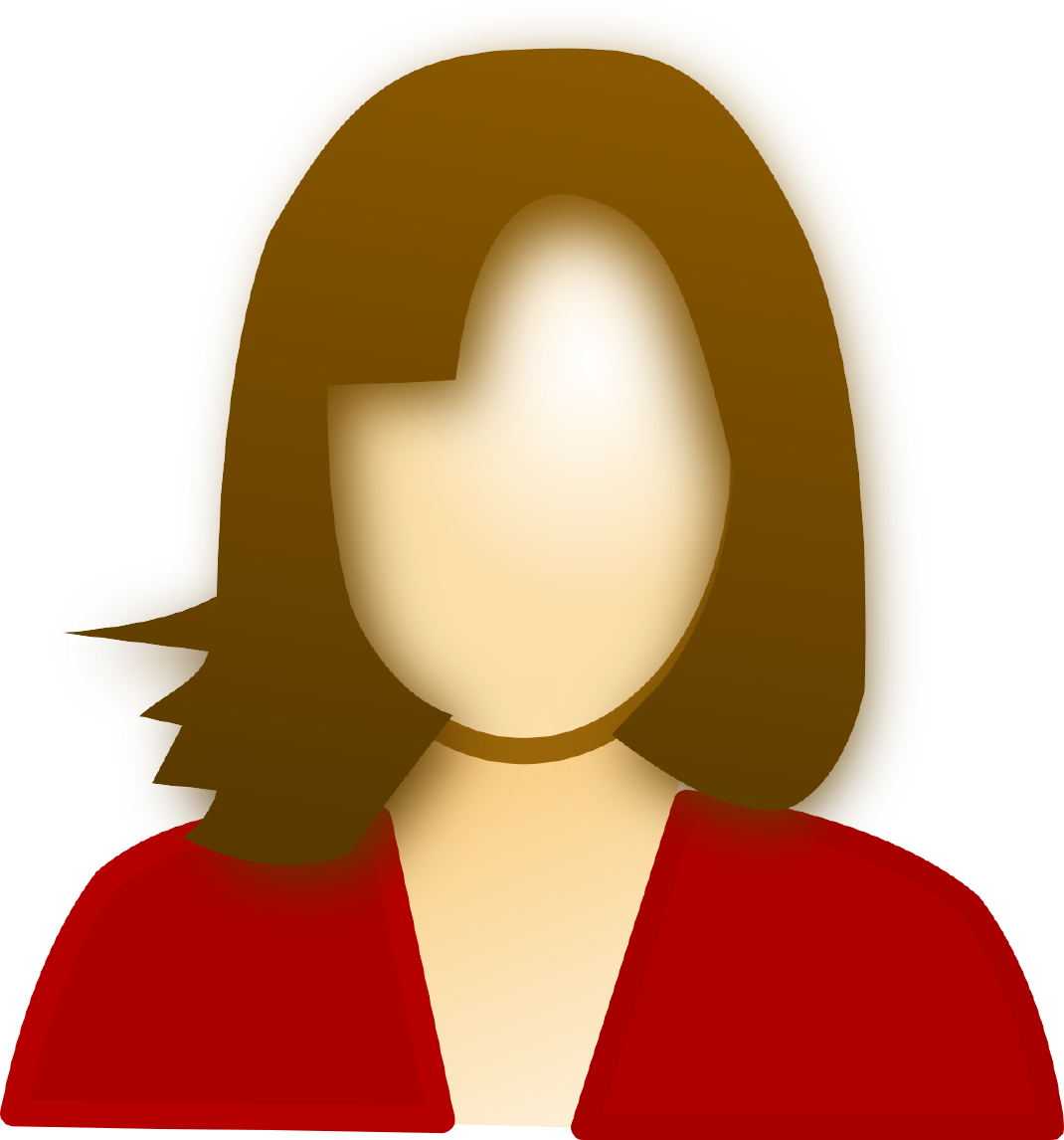}\\
      Gloria
    \end{tabular}};
   \node[inner sep=0pt] (fred) at (5,2) 
   {\begin{tabular}{c}
      \includegraphics[width=15mm]{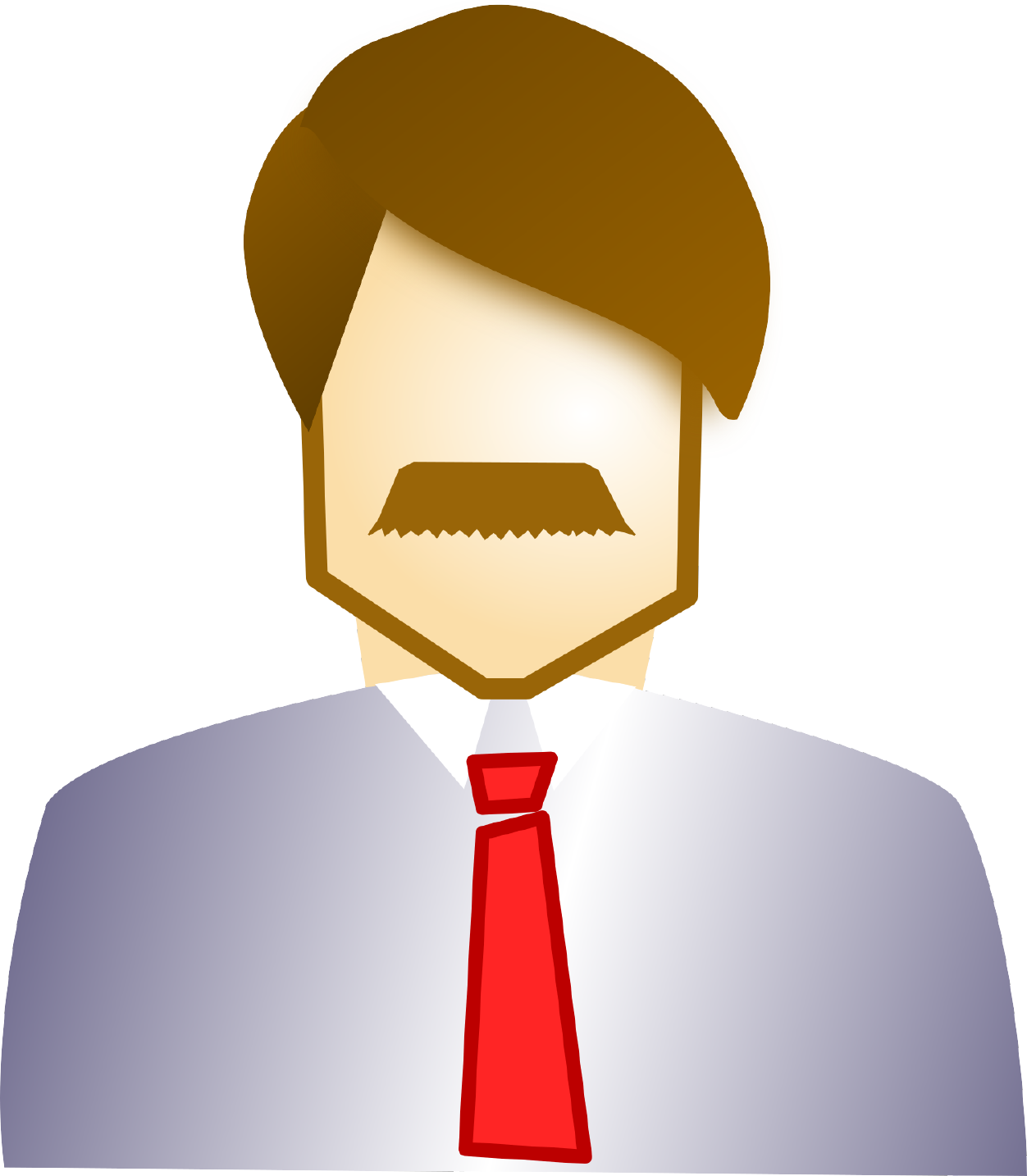}\\
      Fred
    \end{tabular}};
\end{tikzpicture}
\caption{Rainforest example.
  Gloria and Fred each buy an Outel chip via Rainforest's store.
  Gloria's transaction is in {\color{red}red}, dashed arrows; Fred's is 
  in {\color{blue}blue}, solid arrows.
}
\label{fig:rainforest}
\end{figure}

A simple example illustrates how transaction aborts create a
 channel that can leak information.
Consider a web-store application for the fictional on-line retailer Rainforest,   
 illustrated in \cref{fig:rainforest}.
Rainforest's business operates on data from suppliers, customers,  and banks.
Rainforest wants to ensure that it takes money from customers only if the 
 items ordered have been shipped from the suppliers.
As a result, Rainforest implements purchasing using serializable transactions.
Customers expect that their activities do not influence each other, and that 
 their financial information is not leaked to suppliers.
These expectations might be backed by law.

In \cref{fig:rainforest}, Gloria and Fred are both making purchases on 
 Rainforest at roughly the same time.
They each purchase an Outel chip, and pay using their accounts at CountriBank.
\ifreport
\else

\fi
If Rainforest uses 2PC to perform both of these transactions, it is possible for
 Gloria to see an abort when Outel tries to schedule her transaction and Fred's.
The abort leaks information about Fred's purchase at Outel to Gloria.
Alternatively, if Gloria is simultaneously using her bank account in an 
 unrelated purchase, scheduling conflicts at the bank might leak to Outel,   
 which could thereby learn of Gloria's unrelated purchase.

These concerns are about confidentiality, but transactions may also create 
 integrity concerns.
The bank might choose to abort transactions to affect the order in
which Outel sells chips.
Rainforest and Outel may not want the bank to have this power.


\begin{figure}
\centering
 \begin{tikzpicture} [scale=1,auto=center]
   \def\labelsize{\normalsize}
   \def\nodesize{7mm}
   \node[rounded corners=2mm, draw, fill=blue!10, minimum height=10mm,minimum width=30mm,text width=30mm,text height=10mm] (outel) at (.9,0) 
     {\includegraphics[width=10mm]{icons/outel.pdf}};
   \node[rounded corners=2mm, draw, fill=red!10, minimum height=10mm,minimum width=30mm,text width=30mm,text height=10mm] (bank) at (5.1,0) 
       {\ \hfill \includegraphics[width=10mm]{icons/Bank.pdf}};
   \node[rounded corners=2mm, draw, fill=green!10, minimum width=40mm,minimum height=30mm] (rainforest) at (3,2.3) 
   {\begin{tabular}{c}
      Rainforest\\
      \includegraphics[width=10mm]{icons/Rainforest.pdf}
    \end{tabular}};
   \node[rounded corners=2mm, draw, fill=red!10, minimum width=10mm,text width=13mm,minimum height=20mm,text height=20mm] (gloria) at (0,2.3) 
   {\begin{tabular}{c}
      \includegraphics[width=10mm]{icons/Gloria.pdf}\\
      Gloria
    \end{tabular}};
   \node[rounded corners=2mm, draw, fill=blue!10, minimum width=10mm,text width=13mm,minimum height=10mm,text height=10mm] (fred) at (6,2.3) 
   {\ \hfill \begin{tabular}{c}
      \includegraphics[width=10mm]{icons/Fred.pdf}\\
      Fred
    \end{tabular}};

  \tikzstyle{every node} = [circle,
      draw={rgb:red,1;black,1},
      inner sep=0pt,
      line width=1pt,
      fill=red,
      minimum size=\nodesize]
  \node (r0) at (0,3) {\color{white}\labelsize$\mathbf{r_0}$};
  \node (r1) at (1.7,3) {\color{white}\labelsize$\mathbf{r_1}$};
  \node (r2) at (1.5,1.4) {\color{white}\labelsize$\mathbf{r_2}$};
  \node (r3) at (2.5,1.4) {\color{white}\labelsize$\mathbf{r_3}$};
  \node (r4) at (0.9,0) {\color{white}\labelsize$\mathbf{r_4}$};
  \node (r5) at (4.1,0) {\color{white}\labelsize$\mathbf{r_5}$};
  \foreach \from/\to in {r0/r1,r1/r2,r1/r3,r2/r4,r3/r5}
      \draw [red,fill=white,->,line width=1pt,>=open triangle 60] (\from) -- (\to);

  \tikzstyle{every node} = [circle,
      inner sep=0pt,
      draw=blue,
      line width=1pt,
      fill=blue,
      minimum size=\nodesize]
  \node (b0) at (6,3) {\color{white}\labelsize$\mathbf{b_0}$};
  \node (b1) at (4.3,3) {\color{white}\labelsize$\mathbf{b_1}$};
  \node (b3) at (4.5,1.4) {\color{white}\labelsize$\mathbf{b_3}$};
  \node (b2) at (3.5,1.4) {\color{white}\labelsize$\mathbf{b_2}$};
  \node (b5) at (5.1,0) {\color{white}\labelsize$\mathbf{b_5}$};
  \node (b4) at (1.9,0) {\color{white}\labelsize$\mathbf{b_4}$};
  \foreach \from/\to in {b0/b1,b1/b2,b1/b3,b2/b4,b3/b5}
      \draw [blue,fill=white,->,line width=1pt,>=open triangle 60] (\from) -- (\to);

\end{tikzpicture}
\caption{
  The events of the transactions in \cref{fig:rainforest}. 
  Gloria's transaction consists of {\color{red}$\mathbf{r_0}$}, 
   {\color{red}$\mathbf{r_1}$}, {\color{red}$\mathbf{r_2}$}, 
   {\color{red}$\mathbf{r_3}$}, {\color{red}$\mathbf{r_4}$}, and 
   {\color{red}$\mathbf{r_5}$}. 
  Bob's consists of 
   {\color{blue}$\mathbf{b_0}$}, {\color{blue}$\mathbf{b_1}$}, 
   {\color{blue}$\mathbf{b_2}$}, {\color{blue}$\mathbf{b_3}$}, 
   {\color{blue}$\mathbf{b_4}$}, and {\color{blue}$\mathbf{b_5}$}.
  Happens-before $\p\before$ relationships are arrows.
  The shaded blocks around events indicate locations, and are labeled with 
   participants from \cref{fig:rainforest}.
}
\label{fig:rainforest-events}
\end{figure}

\subsection{Hospital Example}
\label{sec:hospital-example}
As a second, running example, we use two small programs with an abort channel.
Suppose \Patsy is a trusted hospital employee, running the code in 
 Fig.~\ref{fig:insecure-hospital-patsy} to collect the addresses of
 HIV-positive patients in order to send treatment reminders.
\Patsy runs her transaction on her own computer, which she fully controls, but
 it interacts with a trusted hospital database on another machine.
\Patsy starts a transaction for each patient \texttt p, where
 transaction blocks are indicated by the keyword \texttt{atomic}.
If \texttt p does not have HIV, the transaction finishes immediately.
Fig.~\ref{fig:insecure-hospital-positive} shows the resulting transaction in 
 solid {\color{blue} blue}.
(Events in the transaction are represented as ovals; arrows represent 
 dependencies between transaction events.)
Otherwise, if the patient has HIV, \Patsy's transaction reads the patient's 
 address and prints it (the {\color{blue} blue} transaction 
 in Fig.~\ref{fig:insecure-hospital-positive}, including dashed events).

Suppose \Attacker is another employee at the same hospital, but is not 
 trusted to know each patient's HIV status.
\Attacker is, however, trusted with patient addresses.
Like \Patsy, \Attacker's code runs on \AttackerPossessivePronoun{} own computer,
 which she fully controls, but interacts with the trusted hospital
 database on another machine.
\AttackerPronounCapitalized{} runs the code in 
 Fig.~\ref{fig:insecure-hospital-attacker} to update each patient's address 
 in a separate transaction, resulting in the red transaction in 
 \cref{fig:insecure-hospital-positive}.
When \Attacker updates the address of an HIV-positive patient,
\AttackerPossessivePronoun{} 
 transaction might conflict with one of \Patsy's, and \Attacker would
 observe an abort.
Thus \Attacker can learn which patients are HIV-positive by updating
 each patient's address while \Patsy is checking the patients' HIV
 statuses.
Each time one of \Attacker's transactions aborts, private information leaks: 
 that patient has HIV.

One solution to this problem
is to change \Patsy's transaction:
instead of reading the address only if the patient is HIV positive, \Patsy
 reads every patient's address.
This illustrates a core goal of our work: identifying which programs can be
 scheduled securely.
In Fig.~\ref{fig:secure-hospital-patsy}, lines~3 and~4 of \Patsy's code 
 have been switched.
As Fig.~\ref{fig:secure-hospital-positive} shows, both possible 
 transactions read the patient's address.
Since \Attacker cannot distinguish which of \Patsy's transactions has 
 run, she cannot learn which patients have HIV.

\begin{figure}[t]
  \setlength{\belowcaptionskip}{0pt}
  \setlength{\abovecaptionskip}{0pt}
  \centering
  \begin{tabular}{|l | r|}
    \hline
      \hspace{-6mm}
      \ifreport
	\begin{subfigure}{0.27\textwidth}
      \else
	\begin{subfigure}{0.17\textwidth}
      \fi
        \vspace{-1mm} 
        \begin{lstlisting}[gobble=10]
          atomic {
           h = p.hasHiv;
           if (h) {
            x = p.address;
            print(x);
           }    
          }
        \end{lstlisting}
        \vspace{-4mm}
        \caption{{\color{blue}\Patsy}'s code}
        \label{fig:insecure-hospital-patsy}
        \vspace{0mm}
      \end{subfigure}
  &
   \multirow{2}{\ifreport 0.5 \else 0.32 \fi\textwidth}{
     \hspace{-4mm}
    \setcounter{subfigure}{2}
      \ifreport
	\begin{subfigure}{0.5\textwidth}
      \else
	\begin{subfigure}{0.32\textwidth}
      \fi
      \ifreport
        \vspace{-12mm}
      \else
        \vspace{-11mm}
      \fi
      \centering
      \ifreport
        \begin{tikzpicture} [scale=1.2,auto=left]
      \else
        \begin{tikzpicture} [scale=1,auto=center]
      \fi
        \def\labelsize{\normalsize}
        \def\nodesize{7mm}
        \node[rounded corners=3mm, fill=blue, minimum height=6mm] (b0) at (0,0) 
          {\color{white} \Patsy start};
        \node[rounded corners=3mm, fill=blue, minimum height=6mm] (b1) at (0,3) 
          {\color{white} Read HIV};
        \node[rounded corners=3mm, fill={rgb:blue,1;white,7}, minimum height=6mm,line width=2pt,draw=blue,dashed] (b2) at (1.1,1) 
          {\color{black} Read address};
        \node[rounded corners=3mm, fill={rgb:blue,1;white,7}, minimum height=6mm,line width=2pt,draw=blue,dashed] (b3) at (2.6,3) 
          {\color{black} Print address};
        \foreach \from/\to in {b0/b1}
           \draw [blue,->,fill=white,line width=1pt,>=open triangle 60] (\from) -- (\to);
        \foreach \from/\to in {b1/b2,b2/b3}
           \draw [blue,dashed,->,fill=white,line width=1pt,>=open triangle 60] (\from) -- (\to);
        \node[rounded corners=3mm, draw={rgb:red,1;black,1}, line width=2pt, fill=red,
              minimum height=6mm] (r0) at (4.2,0) 
          {\color{white} \Attacker start};
        \node[rounded corners=3mm, draw={rgb:red,1;black,1}, line width=2pt, fill=red,
              minimum height=6mm] (r1) at (4,1) 
          {\color{white} Update address};
        \draw [red,->,fill=white,line width=1pt,>=open triangle 60] (r0) -- (r1);
        \node[inner sep=0pt,label={\LARGE\tb?}] (order) at (2.5,1) {};
        \node[inner sep=0pt] (order0) at (2.6,1) {};
        \node[inner sep=0pt] (order2) at (2.4,1) {};
        \foreach \from/\to in {order0/b2,order2/r1}
           \draw [black,->,fill=white,line width=1pt,>=open triangle 60] (\from) -- (\to);
        \node[inner sep=0pt] (left) at (-0.7,2) {};
        \node[inner sep=0pt] (right) at (5.25,2) {};
        \node[inner sep=0pt,label={\tb{High Security} (\texttt H)}] (high) at (4,2) {};
        \node[inner sep=0pt,label={below:\tb{Low Security} (\texttt L)}] (high) at (4,2) {};
        \draw [black,-,dashed,line width=1pt] (left) -- (right);
      \end{tikzpicture}
     \ifreport
       \vspace{-1mm}
     \else
       \vspace{-5mm}
     \fi
    \caption{Resulting transactions}
      \label{fig:insecure-hospital-positive}
    \end{subfigure}
   }

      \\\cline{1-1}
 
      \setcounter{subfigure}{1}
      \hspace{-6mm}
      \ifreport
	\begin{subfigure}{0.27\textwidth}
      \else
	\begin{subfigure}{0.17\textwidth}
      \fi
        \vspace{-1mm}
        \centering
        \begin{lstlisting}[gobble=10]
          atomic {
           p.address+=" ";
          }
        \end{lstlisting}
        \vspace{-4mm}
        \caption{{\color{red}\Attacker}'s code}
        \label{fig:insecure-hospital-attacker}
        \vspace{0mm}
      \end{subfigure}
      &
  \\\hline
  \end{tabular}
  \setlength{\belowcaptionskip}{\correctbelowcaptionskip}
  \caption{
    Insecure hospital scenario.
    {\color{blue}\Patsy} runs a program (\ref{fig:insecure-hospital-patsy}) 
     for each patient \texttt p.
    If \texttt p has HIV (which is private information), she 
     prints out \texttt p's address for her records.
    The resulting transaction takes one of two forms. 
    Both begin with the event {\color{blue}\Patsy start}.
    If \texttt p is HIV negative, the transaction ends with {\color{blue}Read 
     HIV}.
    Otherwise, it includes the {\color{blue}blue} events with dashed outlines.
    Meanwhile, {\color{red}\Attacker} updates the \texttt p's (less secret) 
     address (\ref{fig:insecure-hospital-attacker}), resulting in the 
     transaction with {\color{red}red}, solid-bordered events.
     This conflicts with {\color{blue}\Patsy's transaction}, 
     requiring the system to order the {\color{red}update} and the 
     {\color{blue}read}, exactly when \texttt p has HIV
     (``\tb?'' in~\ref{fig:insecure-hospital-positive}).
  }
  \ifreport
  \else
    \vspace{-6mm}
  \fi
\label{fig:insecure-hospital}
\end{figure}

\subsection{Attack Demonstration}\ifreport\vspace{-3mm}\fi
\label{sec:attackdemo}
Using code resembling~\cref{fig:insecure-hospital},
we implemented the attack described in our hospital example 
 (\cref{sec:hospital-example}) using the Fabric distributed 
 system~\cite{fabric-release03,fabric09}.
We ran nodes representing {\Patsy} and {\Attacker}, and a storage node
for the patient records.

To improve the likelihood of {\Attacker} conflicting with {\Patsy}
(and thereby receiving an abort), we had 
 {\Patsy} loop roughly once a second, continually reading the address of
 a single patient after verifying their HIV-positive status.
Meanwhile, {\Attacker} attempted to update the patient's address
 with approximately the same frequency as {\Patsy}'s transaction.

Like many other distributed transaction systems,
Fabric uses two-phase commit. {\Attacker}'s window of
opportunity for receiving an abort exists between the two phases of
{\Patsy}'s commit, which ordinarily involves a network round trip.
However, both nodes were run on a single computer.
To model a cloud-based server, we
 simulated a 100~ms network delay between {\Patsy} and the storage
 node.

\ifreport
Getting this to work was challenging, because Fabric caches its
objects optimistically. When {\Attacker} updates the patient's
address, it would invalidate {\Patsy}'s cached copy, causing
{\Patsy}'s next transaction to abort and retry.
Furthermore, Fabric implements an exponential back-off algorithm for
retrying aborted transactions.
As a result, we had to carefully tune the transaction frequencies to
prevent {\Attacker} from starving out {\Patsy}.
\fi

We ran this experiment for 90 minutes.
During this time, {\Attacker} received an abort
 roughly once for every 20 transactions
 {\Patsy} attempted.
As a result, approximately every 20 seconds, {\Attacker} learned 
 that a patient had HIV.
In principle, many such attacks could be run in parallel, so
this should be seen as a minimal, rather than a maximal, rate of 
 information leakage for this setup.

As described later,
our modified Fabric compiler~(\cref{sec:implementation}) correctly
rejects \Patsy's code.
We amended \Patsy's code to reflect \cref{fig:secure-hospital}, and our
implementation of 
 the staged commit protocol~(\cref{sec:protocols}) was able to 
 schedule the transactions without leaking information.
{\Attacker} was no more or less likely to receive aborts regardless 
 of whether the patient had HIV.

\section{System Model}
\label{sec:system}
\begin{figure}[t]
  \centering
  \setlength{\belowcaptionskip}{0pt}
  \setlength{\abovecaptionskip}{0pt}
  \begin{tabular}{|l | r|}
    \hline
      \hspace{-6mm}
      \begin{subfigure}{\ifreport 0.27 \else 0.17 \fi\textwidth}
        \vspace{-1mm} 
        \centering
        \lstset{
            literate={〈}{{\ensuremath\langle}}1
                     {〉}{{\ensuremath\rangle}}1
                     {‖}{{\ensuremath\parallel}}1
            }
        \begin{lstlisting}[gobble=10]
          atomic {
           〈h = p.hasHiv‖
            x = p.address〉;
           if (h) {
            print(x);
           }    
          }
        \end{lstlisting}
        \vspace{-4mm}
      \caption{{\color{blue}\Patsy}'s code}
        \label{fig:secure-hospital-patsy}
        \vspace{0mm}
      \end{subfigure}

  &
   \multirow{2}{\ifreport 0.5 \else 0.32 \fi\textwidth}{
    \hspace{-4mm}
    \setcounter{subfigure}{2}
    \begin{subfigure}[b]{\ifreport 0.5 \else 0.32 \fi\textwidth}
    \vspace{-11mm}
      \centering
      \ifreport
        \begin{tikzpicture} [scale=1.2,auto=left]
      \else
        \begin{tikzpicture} [scale=1,auto=center]
      \fi
        \def\labelsize{\large}
        \def\nodesize{7mm}
        \node[rounded corners=3mm, fill=blue, minimum height=6mm] (b0) at (0,0) 
          {\color{white} \Patsy{} start};
        \node[rounded corners=3mm, fill=blue, minimum height=6mm] (b1) at (0,3) 
          {\color{white} Read HIV};
        \node[rounded corners=3mm, fill=blue, minimum height=6mm] (b2) at (1.1,1) 
          {\color{white} Read address};
        \node[rounded corners=3mm, fill={rgb:blue,1;white,7}, minimum height=6mm,line width=2pt,draw=blue,dashed] (b3) at (2.6,3) 
          {\color{black} Print address};
        \foreach \from/\to in {b0/b1,b0/b2}
           \draw [blue,->,fill=white,line width=1pt,>=open triangle 60] (\from) -- (\to);
        \foreach \from/\to in {b1/b3,b2/b3}
           \draw [blue,dashed,->,fill=white,line width=1pt,>=open triangle 60] (\from) -- (\to);
        \node[rounded corners=3mm, draw={rgb:red,1;black,1}, line width=2pt, fill=red,
              minimum height=6mm] (r0) at (4.2,0) 
          {\color{white} \Attacker{} start};
        \node[rounded corners=3mm, draw={rgb:red,1;black,1}, line width=2pt, fill=red,
              minimum height=6mm] (r1) at (4,1) 
          {\color{white} Update address};
        \draw [red,->,fill=white,line width=1pt,>=open triangle 60] (r0) -- (r1);
        \node[inner sep=0pt,label={\LARGE\tb?}] (order) at (2.5,1) {};
        \node[inner sep=0pt] (order0) at (2.6,1) {};
        \node[inner sep=0pt] (order2) at (2.4,1) {};
        \foreach \from/\to in {order0/b2,order2/r1}
           \draw [black,->,fill=white,line width=1pt,>=open triangle 60] (\from) -- (\to);
        \node[inner sep=0pt] (left) at (-0.7,2) {};
        \node[inner sep=0pt] (right) at (5.25,2) {};
        \node[inner sep=0pt,label={\tb{High Security} (\texttt H)}] (high) at (4,2) {};
        \node[inner sep=0pt,label={below:\tb{Low Security} (\texttt L)}] (high) at (4,2) {};
        \draw [black,-,dashed,line width=1pt] (left) -- (right);
      \end{tikzpicture}
    \vspace{\ifreport -2 \else -5 \fi mm}
    \caption{Resulting transactions}
      \label{fig:secure-hospital-positive}
    \end{subfigure}
   }

      \\\cline{1-1}

      \hspace{-6mm}
      \setcounter{subfigure}{1}
      \begin{subfigure}{\ifreport 0.27 \else 0.17 \fi\textwidth}
        \vspace{-1mm}
        \centering
        \begin{lstlisting}[gobble=10]
          atomic {
           p.address+=" ";
          }
        \end{lstlisting}
        \vspace{-4mm}
        \caption{{\color{red}\Attacker{}}'s code}
        \label{fig:secure-hospital-attacker}
        \vspace{0mm}
      \end{subfigure}
      &
  \\\hline
  \end{tabular}
  \setlength{\belowcaptionskip}{\correctbelowcaptionskip}
  \caption{
    Secure hospital scenario.
    A secure version of \cref{fig:insecure-hospital}, in which lines 3 
    and 4 of {\color{blue}\Patsy{}}'s code (\ref{fig:insecure-hospital-patsy}) 
    are switched, and the resulting lines 2 and 3 can be run in parallel ($\an{\ \parallel\ }$).
    Thus the transaction reads \texttt p's address regardless of whether 
     \texttt p has HIV, and so {\color{red}\Attacker{}} cannot distinguish
     which form {\color{blue}\Patsy}'s transaction takes.
  }
  \vspace{-5mm}
\label{fig:secure-hospital}
\end{figure}

We introduce a formal, abstract system model that serves as our
framework for developing protocols and proving their security
properties.
Despite its simplicity, the model captures the necessary features of 
distributed transaction systems and protocols.
As part of this 
model, we define what it means for transactions to be
serializable and what it means for a protocol to serialize
transactions both correctly and securely.

\subsection{State and Events}
\label{sec:state}

Similarly to Lamport~\cite{Lamport78}, we define a \ti{system state}
to include a finite set of \ti{events}, representing a history of the
system up to a moment in time.
%
%
%
An event (denoted $e$) is an atomic native action that takes place at a
 \ti{location}, which can be thought of as a physical computer on the network.
Some events may represent read operations (``the variable $x$ had the
value 3''), or write operations (``2 was written into the variable 
 $y$''). 
In \cref{fig:insecure-hospital,fig:secure-hospital}, for
 example, events are represented as ovals, and correspond to lines of code.

Also part of the system state is a causal ordering on events.
Like Lamport's causality~\cite{Lamport78}, the ordering
describes when one event $e_1$ causes another event $e_2$. In this
case, we say $e_1$ \ti{happens before} $e_2$, written as $e_1\before
e_2$.  This relationship would hold if, for example, $e_1$ is the
sending of a message, and $e_2$ its receipt.
The ordering $\p\before$ is a strict partial order: irreflexive,
asymmetric, and transitive. Therefore, $e_1\before e_2$ and $e_2\before e_3$
together imply $e_1\before e_3$.

The arrows in 
\cref{fig:insecure-hospital,fig:secure-hospital,fig:rainforest-events}
 show happens-before relationships for the transactions involved.
 
\subsection{Information Flow Lattice}

We extend Lamport's model by assigning to each event $e$ a \ti{security
label}, 
 written $\ell\p e$, which defines the confidentiality and integrity
 requirements of the event.
Events are the most fine-grained unit of information in our model, so there is no
 distinction between the confidentiality of an event's
 \ti{occurrence} and that of its \ti{contents}.
 \TM{It might be worth pointing out to the reader why this is
 reasonable, using the attack as an example of knowing the event
 happened being treated as equivalent to knowing what the event's
 content was?  Simulating a new reader, this sentence reads a bit
 ``surprising'' right now if you're not thinking all that hard about
 it.}
 \TMreply{I think we addressed part of this, I'm leaving the comment in
 case anyone would like to reconsider my suggestion of using the attack
 as an example of why it's a reasonable statement.}
Labels in our model are similar to high and low event sets~\cite{Roscoe95,cs08}.
In \cref{fig:insecure-hospital,fig:secure-hospital},
two security labels, \texttt{High} and \texttt{Low} (\texttt H and 
 \texttt L for short),
are represented by the events' positions relative to the dashed
 line.

For generality, we assume that labels
 are drawn from a lattice~\cite{denning-lattice}, depicted in
 \cref{fig:lattice}.
Information is only permitted to flow upward in the lattice.
We write ``$\ell\p{e_1}$ is below $\ell\p{e_2}$'' as 
 $\ell\p{e_1} \less \ell\p{e_2}$, meaning it is secure
 for the information in $e_1$ to flow to $e_2$.

For instance, in \cref{fig:insecure-hospital}, information should not 
 flow from any events labeled \texttt H to any labeled \texttt L.
Intuitively, we don't want secret information to determine any non-secret 
 events, because unauthorized parties might learn something secret.
However, information can flow in the reverse direction: reading the patient's  
 address (labeled \texttt L) can affect \Patsy's printout
 (labeled \texttt H): \texttt L ⊑ \texttt H.

\ifreport
The join $\p\join$ of two labels represents their least upper bound:
 $\ell_1\less\p{\ell_1\join\ell_2}$ and $\ell_2\less\p{\ell_1\join\ell_2}$.
The meet ($\meet$) of two labels represents their greatest lower bound:
$\p{\ell_1\meet\ell_2}\less\ell_1$ and $\p{\ell_1\meet\ell_2}\less\ell_2$.
\fi

Like events, each location has a label, representing a limit on events with 
 which that location can be trusted.
No event should have more integrity than its location.
Similarly, no event should be too secret for its location to know.
Thus, in~\cref{fig:lattice}, only events $\ti{to the left of}$ a
location's
 label (i.e., region C in the figure) may take place at that location.
\label{sec:lattice}

For example, consider Gloria's payment event at CountriBank in the Rainforest
 example  \cref{fig:rainforest}.
This event ({\color{red}$\mathbf{r_5}$} in \cref{fig:rainforest-events})
 represents money moving from Gloria's account to Outel's. 
The label $\ell$ of {\color{red}$\mathbf{r_5}$} should not have any more integrity
 than CountriBank itself, since the bank controls {\color{red}$\mathbf{r_5}$}.
Likewise, the bank knows about {\color{red}$\mathbf{r_5}$}, so $\ell$ cannot be
 more confidential than the CountriBank's label.
This would put $\ell$ to the \ti{left} of the label representing CountriBank in 
 the lattice of \cref{fig:lattice}.

Our prototype implementation of secure transactions is built using the Fabric
 system~\cite{fabric09}, so the lattice used in the implementation is based on 
 the Decentralized Label Model (DLM)~\cite{ml-tosem}. 
However, the results of this paper are independent of the lattice used.

\begin{figure}[t]
\centering
\begin{tikzpicture}
 [scale=\ifreport .4 \else .37 \fi,auto=left, >=latex, every node/.style=]
   \usepgflibrary{arrows}
 \node[scale=0.015,label={below:\twoline{public}{trusted}}] (d) at (11,5) {};
 \node[scale=0.015,label={above:\twoline{secret}{untrusted}}] (u) at (11,15)  {};
 \node[scale=0.015,label={left:\twoline{public}{untrusted}}] (l) at (6,10)  {};
 \node[scale=0.015,label={right:\twoline{secret}{trusted}}] (r) at (16,10)  {};
   \node (cd) at (11,8.3) {};
   \node (cu) at (11,11.7) {};
   \node (il) at (10,4.5) {};
   \node (ih) at (5.5,9) {};
   \node (al) at (12,4.5) {};
   \node (ah) at (16.5,9) {};
   
   \node[scale=0.015] (hi) at (12.5,6.5) {};
   \node[scale=0.015] (li) at (7.5,11.5) {};
   \node[scale=0.015] (hc) at (12.5,13.5) {};
   \node[scale=0.015] (lc) at (7.5,8.5) {};
   \node[circle,scale=0.5,draw,fill=black] (dot) at (9,10) {};
   \node (ld) at (7.5,10) {C};
   \node (td) at (9,11.5) {A};
   \node (bd) at (9,8.3) {B};
   \tikzstyle{every node} = [scale=1,auto=left,every node/.style=]
   \tikzstyle{every node} = [scale=1,auto=left,every node/.style=]
   
    \path[dashed] (hi) edge (li);
    \path[dashed] (hc) edge (lc);
    \path (l) edge node[above] {} (u);
    \path (l) edge node[above] {} (d);
    \path (r) edge node[above] {} (u);
    \path (r) edge node[above] {} (d);
    \path[->] (cd) edge node[right] {$\less$} (cu) ;
    \path[<-] (il) edge node[left] {} (ih);
    \path[->,decoration={text along path,raise=-9pt,
          text={{integrity}},
          text align={center}, text color=black},decorate] (ih) -- (il) ;
    \path[->] (al) edge node[right] {} (ah) ;
    \path[->,decoration={text along path,raise=-9pt,
          text={{confidentiality}},
          text align={center}, text color=black},decorate] 
          (al) -- (ah) ;
 \end{tikzpicture}
\caption{
Security lattice:
The dot represents a label in the lattice, and the dashed lines divide the 
 lattice into four quadrants relative to this label.
If the label represents an event, then only events with labels in quadrant 
 $B$   may influence this event, and this event may only influence events with 
 labels in quadrant $A$.
If the label represents a location, then only events with labels in quadrant 
 $C$ may occur at that location.
}
\label{fig:lattice}
\end{figure}

\newcommand{\hospitalfull}[4] {{
      \begin{tikzpicture} [scale=\ifreport 1.1 \else 0.9 \fi,auto=center]
 \ifthenelse{#1>0}{
    \scope
    \def\lw{1pt}
        \node[inner sep=0pt] (left) at (-2.2,7) {};
        \node[inner sep=0pt] (right) at (6.2,7) {};
        \node[inner sep=0pt,label={\large\tb{High Security}}] (high) at (4.7,7) {};
        \node[inner sep=0pt,yshift=2pt,label={below:\large\tb{Low Security}}] (high) at (4.7,7) {};
        \draw [black,-,dashed,line width=1pt] (left) -- (right);
    }{
    \useasboundingbox (0,0) rectangle (1.3cm,1.1cm);
    \scope[transform canvas={scale=0.2}]
    \def\lw{3pt}
  }
      \def\labelsize{\large}
      \def\nodesize{7mm}
      \ifthenelse{#2>0}{
          \node[rounded corners=3mm, fill=blue, minimum height=6mm] (b0) at (-1.5,0) 
            {\color{white} \Patsy\ start};
      }{}
      \ifthenelse{#2>1}{
        \node[rounded corners=3mm, fill=blue, minimum height=6mm] (b2) at (-0.2,2.1) 
          {\color{white} Read address};
        }{}
      \ifthenelse{#2>2}{
        \node[rounded corners=3mm, fill=blue, minimum height=6mm] (b1) at (-1.5,7.7) 
          {\color{white} Read HIV};
        }{}
      \ifthenelse{#2>3}{
        \node[rounded corners=3mm, fill=blue, minimum height=6mm] (b3) at (1,8.7) 
          {\color{white} Print address};
        }{}
      \ifthenelse{#2>3}{\def\r{b0/b2,b0/b1,b1/b3,b2/b3};}{
         \ifthenelse{#2>2}{\def\r{b0/b2,b0/b1};}{
           \ifthenelse{#2>1}{\def\r{b0/b2};}{
             \def\r{};}}}
        \foreach \from/\to in \r
           \draw [blue,->,fill=white,line width=1pt,>=open triangle 60] (\from) -- (\to);

      \ifthenelse{#3>0}{
        \node[rounded corners=3mm, draw={rgb:red,1;black,1}, line width=2pt, fill=red,
              minimum height=6mm] (r0) at (5.2,0) 
          {\color{white} \Attacker\ start};
        }{}
      \ifthenelse{#3>1}{
        \node[rounded corners=3mm, draw={rgb:red,1;black,1}, line width=2pt, fill=red,
              minimum height=6mm] (r1) at (5.2,5) 
          {\color{white} Update address};
        \draw [red,->,fill=white,line width=1pt,>=open triangle 60] (r0) -- (r1);
      }{}

    \tikzstyle{every node} = [rounded corners=2.5mm,
                              draw=black,
                              line width=2pt,
                              fill=none,
                              minimum height=5mm]
    \ifthenelse{#4>0}{\node (p0) at (1.7,1.2) {{\color{blue}\Patsy}
    {\color{DarkOrchid} acquires lock}};}{}
    \ifthenelse{#4>1}{\node (p1) at (1.7,3.0) {{\color{blue}\Patsy}
    {\color{DarkOrchid} releases lock}}; }{}
    \ifthenelse{#4>2}{\node (p2) at (3.4,4) {{\color{red}\Attacker}
    {\color{DarkOrchid} acquires lock}};}{}
    \ifthenelse{#4>3}{\node (p3) at (3.4,6.1) {{\color{red}\Attacker}
    {\color{DarkOrchid} releases lock}};}{}

    \ifthenelse{#4>3}{\def\pro{b0/p0,p0/b2,b3/p1,p0/p1,p1/p2,p2/r1,r1/p3,p2/p3,r0/p2,b2/p1};}{
    \ifthenelse{#3>1}{\def\pro{b0/p0,p0/b2,b3/p1,p0/p1,p1/p2,r0/p2,p2/r1};}{
    \ifthenelse{#4>2}{\def\pro{b0/p0,p0/b2,b3/p1,p0/p1,p1/p2,r0/p2};}{
      \ifthenelse{#4>1}{\def\pro{b0/p0,p0/b2,b3/p1,p0/p1};}{
        \ifthenelse{#2>1}{\def\pro{b0/p0,p0/b2};}{
          \ifthenelse{#4>0}{\def\pro{b0/p0};}{
            \def\pro{};}}}}}}
    \foreach \from/\to in \pro 
      \draw [black,->,line width=\lw,>=open triangle 60] (\from) -- (\to);
  \endscope
\end{tikzpicture}
}}

\newcommand{\transactionsdiagram}[8] {{
 \ifthenelse{#8>0}{
    \newcommand{\nodesize}{1.7em}
    \begin{tikzpicture} [scale=.5,auto=center]
      \def\nodefont{\normalsize}
    \scope
    }{
    \newcommand{\nodesize}{10mm}
    \begin{tikzpicture} [scale=1,auto=center]
      \def\nodefont{\LARGE}
    \useasboundingbox (0,0) rectangle (\ifreport 1 \else 1.3 \fi cm,1.1cm);
    \scope[transform canvas={scale=\ifreport 0.17 \else 0.2 \fi}]
  }
  \tikzstyle{every node} = [circle,
      draw={rgb:red,1;black,1},inner sep=0pt,
      line width=1pt,
      fill=red,
      minimum size=\nodesize]
    \ifthenelse{#1>0}{\node (n1) at (0,0) {\color{white}\nodefont$\mathbf{r_0}$};}{}
    \ifthenelse{#2>0}{\node (n2) at (2,1) {\color{white}\nodefont$\mathbf{r_1}$};}{}
    \ifthenelse{#3>0}{\node (n3) at (0,5) {\color{white}\nodefont$\mathbf{r_2}$};}{}
    
    \ifthenelse{#3>0}{\def\r{n1/n2,n2/n3};}{
       \ifthenelse{#2>0}{\def\r{n1/n2};}{
          \def\r{};}}
  \ifthenelse{#8>0}{\def\lw{1pt}}{\def\lw{3pt}}
  \foreach \from/\to in \r 
    \draw [draw={rgb:red,1;black,1},
             fill=none,
             line width=\lw,
             ->,
             >=open triangle 60] (\from) -- (\to);
    \tikzstyle{every node} = [circle,draw=blue,line width=1pt,fill=blue,inner sep=0pt,minimum size=\nodesize]
    \ifthenelse{#4>0}{\node (n4) at (7,0) {\color{white}\nodefont$\mathbf{b_0}$};}{}
    \ifthenelse{#5>0}{\node (n5) at (5,4) {\color{white}\nodefont$\mathbf{b_1}$};}{}
    \ifthenelse{#6>0}{\node (n6) at (7,5) {\color{white}\nodefont$\mathbf{b_2}$};}{}
    \ifthenelse{#6>0}{\def\b{n4/n5,n5/n6};}{
       \ifthenelse{#5>0}{\def\b{n4/n5};}{
          \def\b{};}}
    \foreach \from/\to in \b 
      \draw [blue,fill=none,line width=\lw,->,>=open triangle 60] (\from) -- (\to);
    \ifthenelse{#8>0}{\def\peventlw{2pt}}{\def\peventlw{6pt}}
    \tikzstyle{every node} = [circle,
                              draw=black,
                              line width=\peventlw,
                              fill=none,inner sep=0pt,
                              minimum size=\nodesize]
    \ifthenelse{#7>0}{\node (n7) at (3.5,2.5) {\color{black}\nodefont$\mathbf{p}$};
    \ifthenelse{#5>0}{\def\pro{n2/n7,n7/n5};}{
       \def\pro{n2/n7};}
    \foreach \from/\to in \pro 
      \draw [black,->,line width=\lw,>=open triangle 60] (\from) -- (\to);}{}
  \endscope
\end{tikzpicture}
}}

\subsection{Conflicts}
\label{sec:conflict}
Two events in different transactions may \ti{conflict}.
This is a property inherent to some pairs of events.
Intuitively, conflicting events are events that must be ordered for
data to be consistent.
For example,
if $e_1$ represents reading variable $x$, and $e_2$ represents 
 writing $x$, then they conflict, and furthermore, the value read and the value
 written establish an ordering between the events.
Likewise, if two events both write variable $x$, they conflict, and the system must
 decide their ordering because it affects future reads of $x$.

In our hospital example 
 (\cref{fig:insecure-hospital,fig:secure-hospital}), the events
 {\color{blue}Read address} and {\color{red}Update address} conflict.
Specifically, the value read will change depending on whether it is read before 
 or after the update.
Thus for any such pair of events, there is a happens-before
$\p\before$ ordering between them, in one direction or the other.

We assume that conflicting events have the same label. This assumption
is intuitive in the case of events that are reads and writes to the
same variable (that is, storage location). Read and write operations
in separate transactions could have occurred in either order, so the
happens-before relationship between the read and write events cannot
be predicted in advance.

Our notion of \ti{conflict} is meant to describe direct interaction between
transactions. Hence, we also assume any conflicting events happen at the same 
location.

\subsection{Serializability and \TransactionSecurity}
\label{sec:serializability}
Traditionally a transaction is modeled as a set of reads and writes to 
 different objects~\cite{Papa79}.
We take a more abstract view, and model a transaction as a set of 
 events that arise from running a piece of code.
Each transaction features a \ti{start event}, representing the decision 
 to execute the transaction's code.
Start events, by definition, happen before all others in the transaction.
Multiple possible transactions can feature the same start event: the complete
 behavior of the transaction's code is not always determined when it
 starts executing, and may
 depend on past system events.

\cref{fig:secure-hospital-positive}
 shows two possible transactions, in {\color{blue}blue}, that can
 result from running the secure version of {\color{blue}\Patsy}'s code.
They share the three events in solid {\color{blue}blue}, including the
start event ({\color{blue}\Patsy\ start}); one transaction contains a
fourth event, {\color{blue}Print address}.
The figure also shows in {\color{red}red} the transaction resulting from {\color{red}\Attacker}'s
 code.
\cref{fig:red-blue-transactions} is a more abstract example, in which
 {\color{red}$\mathbf{r_0}$} is the start event of transaction
 {\color{red}$\mathbf{R}$}, and {\color{blue}$\mathbf{b_0}$} is the start event
 of transaction {\color{blue}$\mathbf{B}$}.

\begin{figure}
\centering
\transactionsdiagram1111111{1}
\caption{An example system state. 
The events {\color{red} $\mathbf{r_0},\mathbf{r_1},$} and 
 {\color{red}$\mathbf{r_2}$} form transaction {\color{red}$\mathbf R$}, and the
 events {\color{blue} $\mathbf{b_0},\mathbf{b_1},$} and
 {\color{blue}$\mathbf{b_2}$} form transaction {\color{blue}$\mathbf B$}.
Event $\mathbf p$ is not part of either transaction.
It may be an input, such as a network delay event, or part of a protocol used 
 to schedule the transactions.
In this state, {\color{red}$\mathbf{r_1}$}\before $\mathbf p$ \before
 {\color{blue}$\mathbf{b_1}$}, which means that {\color{red}$\mathbf{r_1}$} 
 happens before {\color{blue}$\mathbf{b_1}$}, and so the transactions are 
 ordered: {\color{red}$\mathbf R$}\before {\color{blue}$\mathbf B$}.}
\label{fig:red-blue-transactions}
\end{figure}

In order to discuss what it means to \ti{serialize} transactions, we need a
 notion of the \ti{order} in which transactions happen.
We obtain this ordering
 by lifting the happens-before relation on events to
a happens-before $\p\before$ relation for transactions.
We say that transaction $T_2$ \ti{directly depends} on $T_1$, written
$T_1 \prec T_2$, if
an event in $T_1$ happens before an event in $T_2$:
\[
T_1\prec T_2 \quad \equiv \quad 
T_1\ne T_2 \land
\exists e_1\in T_1,e_2\in T_2~.~e_1\before e_2
\]
The happens-before relation on transactions $\p\before$ is the transitive 
 closure of this direct dependence relation $\prec$.
Thus, in \cref{fig:red-blue-transactions}, the ordering {\color{red}$\mathbf 
 R$}\before {\color{blue}$\mathbf B$} holds.
Likewise, \cref{fig:hospital-events} is a system state featuring the 
 transactions from our hospital example (\cref{fig:secure-hospital}), in 
 which {\color{blue}\Patsy}\before{\color{red}\Attacker} holds.

\begin{definition}[Serializability]
\label{definition:serializability}
Transactions are serializable exactly when happens-before is a strict partial 
 order on transactions.
\end{definition}
Any total order consistent with this strict
 partial order would then respect the happens-before
 ordering $\p\before$ of events.
Such a total ordering would represent a \ti{serial order} of transactions.

\begin{definition}[\TransactionSecurity]
  \label{def:securetransaction}
  A transaction is \ti{\TransactionSecurityAdjective} if happens-before
   $\p\before$ relationships between transaction events—and therefore
   causality—are consistent with permitted information flow:
  \[
    e_1\before e_2\quad\Longrightarrow\quad\ell\p{e_1}\less\ell\p{e_2}
  \]
\end{definition}

This definition represents traditional information flow control
 within each transaction.
Intuitively, each transaction itself cannot cause a security breach 
 (although this definition says nothing about the protocol scheduling them).
In our hospital example, {\color{blue}\Patsy}'s transaction in 
 Fig.~\ref{fig:insecure-hospital-positive} is not
 \ti{\TransactionSecurityAdjective}, since
 {\color{blue}Read HIV} happens before {\color{blue}Read address}, and yet 
 the label of {\color{blue}Read HIV}, \texttt{H}, does not flow to the 
 label of {\color{blue}Read address}, \texttt{L}.
However, in the modified, secure version 
(Fig.~\ref{fig:secure-hospital-positive}), there are no such insecure 
 happens-before relationships, so {\color{blue}\Patsy}'s transaction is 
 secure.\ifreport\vspace{-3mm}\fi

\begin{figure}
\newcommand{\hospitaltwopc}[0] {{
      \begin{tikzpicture} [scale=0.9,auto=center]
    \scope
    \def\lw{1pt}
        \node[inner sep=0pt] (left) at (-2.2,7) {};
        \node[inner sep=0pt] (right) at (17,7) {};
        \node[inner sep=0pt,label={\large\tb{High Security}}] (high) at (15.5,7) {};
        \node[inner sep=0pt,yshift=2pt,label={below:\large\tb{Low Security}}] (high) at (15.5,7) {};
        \draw [black,-,dashed,line width=1pt] (left) -- (right);
      \def\labelsize{\large}
      \def\nodesize{7mm}
    \tikzstyle{every node} = [rounded corners=3mm,
                              fill=blue,
                              minimum height=6mm]
        \node (patsystart) at (-1.5,0) {\color{white} \Patsy\ start};
        \node (readhiv) at (4,9) {\color{white} Read HIV};
        \node (readaddress) at (7,6) {\color{white} Read address};
        \node (patsyprint) at (-1,20) {\color{white} Print address};

        \foreach \from/\to in {patsystart/readhiv,readhiv/readaddress,readaddress/patsyprint}
           \draw [blue,->,fill=white,line width=1pt,>=open triangle 60] (\from) -- (\to);

    \tikzstyle{every node} = [rounded corners=3mm,
                              fill={rgb:red,1;black,1},
                              minimum height=6mm]
        \node (attackerstart) at (16,0) {\color{white} \Attacker\ start};
%
    \tikzstyle{every node} = [rounded corners=2.5mm,
                              draw=black,
                              line width=2pt,
                              fill=none,
                              minimum height=5mm]
    \node (patsysend) at (-1,1.6) {\begin{tabular}{c}{\color{blue}\Patsy} sends\\message\\to database\end{tabular}};
    \node (patsyreceive) at (2,3.6) {\begin{tabular}{c}database receives\\{\color{blue}\Patsy}'s message\end{tabular}};

    \node (patsyhivlock) at (2,8) {{\color{blue}\Patsy} {\color{DarkOrchid} acquires HIV lock}};
    \node (patsyaddresslock) at (4,1) {{\color{blue}\Patsy} {\color{DarkOrchid} acquires address lock}};
    \node (patsyyes) at (7,12) {\begin{tabular}{c}database\\sends {\color{blue}\Patsy}'s\\\ti{ready} message\end{tabular}};
    \node (patsyyesreceive) at (-1,12) {\begin{tabular}{c}{\color{blue}\Patsy} receives\\\ti{ready} message\end{tabular}};
    \node (patsycommit) at (-1,14) {\begin{tabular}{c}{\color{blue}\Patsy} sends\\\ti{commit} message\end{tabular}};
    \node (patsycommitreceive) at (10,14) {\begin{tabular}{c}database \\receives {\color{blue}\Patsy}'s\\\ti{commit} message\end{tabular}};
    \node (patsyhivlockrelease) at (9,10) {{\color{blue}\Patsy} {\color{DarkOrchid} releases HIV lock}};
    \node (patsyaddresslockrelease) at (11.7,6) {{\color{blue}\Patsy} {\color{DarkOrchid} releases address lock}};

    \node (attackersend) at (16,1.5) {\begin{tabular}{c}{\color{red}\Attacker} sends\\message\\to database\end{tabular}};
    \node (attackerreceive) at (11.8,1) {\begin{tabular}{c}database receives\\{\color{red}\Attacker}'s message\end{tabular}};
    \node (attackeraddresslock) at (7.6,2) {{\color{red}\Attacker} {\color{DarkOrchid} tries to acquire address lock}};
    \node (attackerabort) at (11.5,3.3) {\begin{tabular}{c}Database sends {\color{red}\Attacker}\\an \ti{abort} message\end{tabular}};
    \node (attackerabortreceive) at (16,3.3) {\begin{tabular}{c}{\color{red}\Attacker} receives \\\ti{abort} message\end{tabular}};

%
    \foreach \from/\to in {patsystart/patsysend,patsysend/patsyreceive,patsyreceive/patsyhivlock,patsyhivlock/readhiv,readhiv/patsyaddresslock,patsyaddresslock/readaddress,readaddress/patsyyes,patsyyes/patsyyesreceive,patsyyesreceive/patsycommit,patsycommit/patsycommitreceive,patsycommit/patsyprint,patsycommitreceive/patsyhivlockrelease,patsycommitreceive/patsyaddresslockrelease,readhiv/patsyhivlockrelease,readaddress/patsyaddresslockrelease,attackerstart/attackersend,attackeraddresslock/attackerabort,attackerabort/attackerabortreceive,patsyaddresslock/attackeraddresslock,attackeraddresslock/patsyaddresslockrelease,attackersend/attackerreceive,attackerreceive/attackeraddresslock}
      \draw [black,->,line width=\lw,>=open triangle 60] (\from) -- (\to);
    \foreach \from/\to in {readhiv/patsyaddresslock,patsyaddresslock/attackeraddresslock,attackeraddresslock/attackerabort,attackerabort/attackerabortreceive}
      \draw [orange,->,line width=\lw,>=open triangle 60] (\from) -- (\to);
  \endscope
\end{tikzpicture}
}}
\centering
\hospitalfull1424
\caption{
  A possible system state after running transactions from 
   Fig.~\ref{fig:secure-hospital-positive}, assuming the patient has HIV, and 
     an exclusive lock is used to order the transactions.
  (Events prior to everything in both transactions are not shown.)
  Because {\color{blue}\Patsy} acquires the lock first, the transactions are 
   ordered {\color{blue}\Patsy}\before{\color{red}\Attacker}.
  While each transaction is \TransactionSecurityAdjective (a property of events
   within a transaction), when {\color{blue}\Patsy} releases the lock after her
   transaction, a high security event happens before a low security one.
  We discuss secure scheduling protocols in \cref{sec:stagedcommit}.
}
\label{fig:hospital-events}
\end{figure}

\subsection{Network and Timing}\ifreport\vspace{-3mm}\fi
Although this model abstracts over networks and messaging, we consider
a message to comprise both a \ti{send event} and a \ti{receive event}.
We assume asynchronous messaging: no guarantees can be made about network 
 delay.
Perhaps because this popular assumption is so daunting, many security 
 researchers ignore timing-based attacks.
There are methods for mitigating leakage via timing
 channels~\cite{Kopf:Durmuth:CSF2009,azm10,barthe2006} but in this work we too ignore
 timing.

To model nondeterministic message delay, we introduce a \ti{network delay event} for each 
 message receipt event, with the same label and location.
The network delay event may occur at any time after the message send event.
It must happen before $\p\before$ the corresponding receipt event.
In \cref{fig:red-blue-transactions}, event {\color{red}$\mathbf{r_1}$}
 could represent sending a message, event $\mathbf{p}$ could be the
 corresponding network delay event, which is not part of any transaction, and
 event {\color{blue}$\mathbf{b_1}$} could be the message receipt event.
\cref{fig:red-blue-transactions} does not require $\mathbf p$ to be a 
 network delay event.
It could be any event that is not part of either transaction.
For example, it might be part of some scheduling protocol.

\begin{figure*}[t]
\newcommand\boxlabelsize{\large}
\centering
\begin{tabular}{|r | c | c |  c |  c |  c |  c |  c |  c|}\hline
Event Scheduled:&
  & {\color{red}\boxlabelsize$\mathbf{r_0}$ }
  & {\color{red}\boxlabelsize$\mathbf{r_1}$ }
  & {\color{red}\boxlabelsize$\mathbf{r_2}$ }
  & {\color{blue}\boxlabelsize$\mathbf{b_0}$} 
  & {\color{black}\boxlabelsize$\mathbf{p}$ }
  & {\color{blue}\boxlabelsize$\mathbf{b_1}$ }
  & {\color{blue}\boxlabelsize$\mathbf{b_2}$}\\
Resulting State:&
$\cb{}$ &
\transactionsdiagram10000000 &
\transactionsdiagram11000000 &
\transactionsdiagram11100000 &
\transactionsdiagram11110000 &
\transactionsdiagram11110010 &
\transactionsdiagram11111010 &
\transactionsdiagram11111110 
\\\hline
Event Scheduled:&
  & {\color{blue}\boxlabelsize$\mathbf{b_0}$} 
  & {\color{red}\boxlabelsize$\mathbf{r_0}$ }
  & {\color{red}\boxlabelsize$\mathbf{r_1}$ }
  & {\color{black}\boxlabelsize$\mathbf{p}$ }
  & {\color{blue}\boxlabelsize$\mathbf{b_1}$ }
  & {\color{blue}\boxlabelsize$\mathbf{b_2}$}
  & {\color{red}\boxlabelsize$\mathbf{r_2}$ }\\
Resulting State:&
$\cb{}$ &
\transactionsdiagram00010000 &
\transactionsdiagram10010000 &
\transactionsdiagram11010000 &
\transactionsdiagram11010010 &
\transactionsdiagram11011010 &
\transactionsdiagram11011110 &
\transactionsdiagram11111110 
\\\hline
\end{tabular}
\caption{
Two equivalent full executions for the system state from 
 \cref{fig:red-blue-transactions}.
Each begins with a start state (the empty set for full
executions),
 followed by a sequence of events, each of which corresponds to the resulting
 system state.}
\label{fig:red-blue-executions}
\end{figure*}

\subsection{Executions, Protocols, and Inputs}
\label{sec:executionsprotocolsinputs}
An \ti{execution} is a start state paired with a totally ordered sequence of 
 events that occur after the start state.
This sequence must be consistent with happens-before $\p\before$.
Recall that a system state is a set of events (\cref{sec:state}).
Each event in the sequence therefore corresponds to a system state containing
 all the events in the start state, and all events up to and including this
 event in the sequence.
Viewing an execution as a sequence of system states, an event is
 \ti{scheduled} if it is in a state, and once it is scheduled, it will be 
 scheduled in all later states.
Two executions are \ti{equivalent} if their start states are equal, and their 
 sequences contain the same set of events, so they finish with equal system 
 states (same set of events, same \before).
A \ti{full execution} represents the entire lifetime of the system, so its start
 state contains no events.

For example, \cref{fig:red-blue-executions} illustrates two equivalent
 full executions ending in the system state from
 \cref{fig:red-blue-transactions}.

A transaction scheduling \ti{protocol} determines the order in which each 
 location schedules the events of transactions.
Given a set of possible 
 transactions, a location, and a set of events representing a system state at 
 that location, a protocol decides which event is scheduled next by the location:
\[
protocol : \textbf{set}\left<\textrm{Transactions}\right> \times \textrm{Location} \times \textrm{State} \rightarrow \textrm{event}
\]

Protocols can schedule an event from a started (but unfinished) transaction, or
 other events used by the protocol itself.
In order to schedule transaction events in ways that satisfy certain 
 constraints, like serializability, protocols may have to schedule additional 
 events, which are not part of any transaction.
These can include message send and receipt events.
For example, in \cref{fig:hospital-events}, the locking events are not 
 part of any transaction, but are scheduled by the protocol in order to ensure 
 serializability.

Certain kinds of events are not scheduled by protocols, because
they are not under the control of the system.
Events representing external inputs, including the start events of transactions, 
can happen at any time: they are fundamentally nondeterministic.
We also treat the receive times of messages as external inputs.
Each message receive event is the deterministic result of its send event and 
of a nondeterministic \ti{network delay event} featuring the same security label as the 
receive event. 
We refer to start and network delay events collectively as \ti{nondeterministic
 input events} (NIEs).


Protocols do not output NIEs.
Instead, an NIE may appear at any point in an execution, and any prior events in 
 the execution can happen before $\p\before$ the NIE.
Recall that an execution features a sequence of events, each of which can be
 seen as a system state featuring all events up to that point.
An execution $E$ is consistent with a protocol $p$ if every event in the
 sequence is either an NIE, or the result of $p$ applied to the previous state
 at the event's location.
We sometimes say $p$ \ti{results in} $E$ to mean ``$E$ is consistent with $p$.''

As an example, assume all events in \cref{fig:red-blue-transactions} 
 have the same location $L$, and no messages are involved.
Start events {\color{red} $\mathbf{r_0}$} and {\color{blue} $\mathbf{b_0}$} 
 are NIEs.
Every other event has been scheduled by a protocol.
\cref{fig:red-blue-executions} shows two different executions, which may 
 be using different protocols, determining which events to schedule in each 
 state.
We can see that in the top execution of \cref{fig:red-blue-executions}, 
 the protocol maps:
\[
\begin{array}{rl}
  \cb{{\color{red} \mathbf{R}},{\color{blue} \mathbf{B}},\dots} , L, 
  \cb{{\color{red} \mathbf{r_0}}} \mapsto&\!\! 
  {\color{red} \mathbf{r_1}}
  \\
  \cb{{\color{red} \mathbf{R}},{\color{blue} \mathbf{B}},\dots} , L, 
  \cb{{\color{red} \mathbf{r_0}},{\color{red} \mathbf{r_1}}} \mapsto&\!\! 
  {\color{red} \mathbf{r_2}}
  \\
  \cb{{\color{red} \mathbf{R}},{\color{blue} \mathbf{B}},\dots} , L, 
  \cb{{\color{red} \mathbf{r_0}},{\color{red} \mathbf{r_1}},{\color{red} \mathbf{r_2}},
      {\color{blue} \mathbf{b_0}}} \mapsto&\!\!  
  {\color{black} \mathbf{p}}
  \\
  \cb{{\color{red} \mathbf{R}},{\color{blue} \mathbf{B}},\dots} , L, 
  \cb{{\color{red} \mathbf{r_0}},{\color{red} \mathbf{r_1}},{\color{red} \mathbf{r_2}},
      {\color{blue} \mathbf{b_0}},{\color{black} \mathbf{p}}} \mapsto&\!\! 
  {\color{blue} \mathbf{b_1}}
  \\
  \cb{{\color{red} \mathbf{R}},{\color{blue} \mathbf{B}},\dots} , L, 
  \cb{{\color{red} \mathbf{r_0}},{\color{red} \mathbf{r_1}},{\color{red} \mathbf{r_2}},
      {\color{blue} \mathbf{b_0}},{\color{black} \mathbf{p}},
      {\color{blue} \mathbf{b_1}}} \mapsto&\!\! 
  {\color{blue} \mathbf{b_2}}
\end{array}
\]
The protocol in the bottom execution of \cref{fig:red-blue-executions}
 maps:
\[
\begin{array}{rl}
  \cb{{\color{red} \mathbf{R}},{\color{blue} \mathbf{B}},\dots} , L, 
  \cb{{\color{red} \mathbf{r_0}},{\color{blue} \mathbf{b_0}}} \mapsto&\!\! 
  {\color{red} \mathbf{r_1}}
  \\
  \cb{{\color{red} \mathbf{R}},{\color{blue} \mathbf{B}},\dots} , L, 
  \cb{{\color{red} \mathbf{r_0}},{\color{blue} \mathbf{b_0}},
      {\color{red} \mathbf{r_1}}} \mapsto&\!\! 
  {\color{black} \mathbf{p}}
  \\
  \cb{{\color{red} \mathbf{R}},{\color{blue} \mathbf{B}},\dots} , L, 
  \cb{{\color{red} \mathbf{r_0}},{\color{blue} \mathbf{b_0}},
      {\color{red} \mathbf{r_1}},{\color{black} \mathbf{p}}} \mapsto&\!\! 
  {\color{blue} \mathbf{b_1}}
  \\
  \cb{{\color{red} \mathbf{R}},{\color{blue} \mathbf{B}},\dots} , L, 
  \cb{{\color{red} \mathbf{r_0}},{\color{blue} \mathbf{b_0}},
      {\color{red} \mathbf{r_1}},{\color{black} \mathbf{p}},
      {\color{blue} \mathbf{b_1}}} \mapsto&\!\! 
  {\color{blue} \mathbf{b_2}}
  \\
  \cb{{\color{red} \mathbf{R}},{\color{blue} \mathbf{B}},\dots} , L, 
  \cb{{\color{red} \mathbf{r_0}},{\color{blue} \mathbf{b_0}},
      {\color{red} \mathbf{r_1}},{\color{black} \mathbf{p}},
      {\color{blue} \mathbf{b_1}},{\color{blue} \mathbf{b_2}}} \mapsto&\!\! 
  {\color{red} \mathbf{r_2}}
\end{array}
\]

Ultimately, a protocol must determine the ordering of transactions.
If the exact set of start events to be scheduled (as opposed to start events
 possible) were always known in advance, scheduling would be trivial.
A protocol should not require one transaction to run before another \ti{a
 priori}:
\tb{start events from any subset of possible transactions may be scheduled at 
 any time.}
No protocol should result in a system state in which such a start event 
 cannot be scheduled, or an incomplete transaction can never finish.

\subsection{Semantic Security Properties}
\label{sec:semanticsecurity}
Consider an observer who can only ``see'' events at some security level $\ell$ 
 or below.
If two states $S_1$ and $S_2$ are indistinguishable to the observer, then 
 after a program runs, noninterference requires that the resulting 
 executions remain indistinguishable to the observer.
Secret values, which the observer cannot see, may differ in $S_1$ and $S_2$, 
 and may result in different states at the end of the executions, but the 
 observer should not be able to see these differences.

\ifreport
\subsubsection{Possibilistic Noninterference}
\label{sec:possibilistic}
David Sutherland's hyperproperty \ti{Generalized 
 Noninterference}\footnote{McCullough coins the term ``Generalized 
 Noninterference''~\cite{mccu88}, and Clarkson and Schneider define 
 hyperproperties~\cite{cs08}.}~\cite{Suther86} generalizes Goguen and 
 Meseguer's noninterference~\cite{GM82}.
His model features ``possible execution sequences'', much like our 
 \ti{executions}, each of which is a  sequence of system states.
For a given observer, some information is \ti{low observable,} meaning the 
 observer may learn it.
Other information is \ti{high}, meaning it's too secret for the observer to 
 know.
His model also features some events, called ``signals,'' representing 
 \ti{inputs}, which can be either low or high.
\ICS{Not sure if I'm staying in Sutherland's model too long, or if I should 
 basically drop his name and move on to explaining it in our model.}
Possibilistic Noninterference, then, requires that for any given execution 
 $E_1$, it must be possible to change the high inputs of $E_1$ to those of any 
 other valid execution $E_2$, and create a valid, \ti{possible} execution 
 $E_3$ without changing any low events:
\ICS{This sentence is terrible.}
\[
\forall E_1,E_2. \exists E_3. 
\ \ \begin{array}{r l r l}
  High\_inputs&\!\!\!\!\!\p{E_3} = &\!\!\!\!High\_inputs&\!\!\!\!\!\p{E_2} \land \\
  Low\_events&\!\!\!\!\!\p{E_3} = &\!\!\!\!Low\_events&\!\!\!\!\!\p{E_1}
    \end{array}
\]
In a sense, an observer can't make any observations that change the 
 \ti{possible} set of high inputs, but might be able to infer which are 
 probable.
This is recognized as a fairly weak form of noninterference in 
 nondeterministic systems.~\cite{cs08}

In our hospital example, as illustrated in \cref{fig:secure-hospital}, 
 the system determines which of {\color{blue}\Patsy}'s transactions will run 
 based upon whether \texttt{p.hasHiv} is \texttt{true}. We can
 treat this condition to be
 a high-security event that happens before all reads of \texttt{p.hasHiv}.
If we classify this past high-security event as input, and all low-security
 events as low-observable for {\color{red}\Attacker}, then we must ensure that 
 when {\color{blue}\Patsy}'s code runs, the set of possible low-security 
 events that result is the same regardless of whether \texttt{p.hasHiv}.
{\color{blue}\Patsy}'s possible transactions in 
 \cref{fig:secure-hospital} ensure possibilistic noninterference, while 
 her transactions in \cref{fig:insecure-hospital} do not, since whether 
 or not {\color{blue}Read address} occurs depends on \texttt{p.hasHiv}.
\fi

\subsubsection{\RelaxedSecurity}
\label{sec:limited}

Semantic conditions for information security are typically based on
some variant of noninterference~\cite{GM82,sm-jsac}. These variants
are often distinguished by their approaches to nondeterminism.
However, many of these semantic security conditions fail under
\ti{refinement}: if some nondeterministic choices are fixed, security
is violated~\cite{zm03}. However, low-security observational determinism~\cite{Roscoe95,zm03} is a
strong property that is secure under refinement:
intuitively, if an observer with label $\ell$ cannot distinguish states $S$ 
 and $S^\prime$, that observer must not be able to distinguish any 
 execution $E$ beginning with $S$ from \ti{any} execution $E^\prime$ 
 beginning with $S^\prime$:
\[
  \p{S\approx_\ell S^\prime}
  \Rightarrow 
  E\approx_\ell E^\prime
\]
This property is \emph{too} strong because it rules out two sources of nondeterminism that
we want to allow: first, the ability of any transaction to
start at any time, and second, network delays.
Therefore, we relax observational determinism to
permit certain nondeterminism.
We only require that executions be indistinguishable to the observer if
their NIEs are indistinguishable to the observer:
\[
  \p{
      S\approx_\ell S^\prime
          \land    
      \NIE\p E \approx_\ell \NIE \p{E^\prime}  
  } 
  \Rightarrow E\approx_\ell E^\prime
\]

We call this relaxed property \ti{\relaxedsecurity}.
It might appear to be equivalent to observational determinism,
but with the NIEs encoded in the start states.
This is not the case.
If NIEs were encoded in the start states, protocols would be able to read which
 transactions will start and when messages will arrive in the future.
Therefore \relaxedsecurity captures something that
 observational determinism does not: unknowable but ``allowed'' nondeterminism
 at any point in an execution.

By deliberately classifying start events and network delays as input, we allow
certain kinds of information leaks that observational determinism would not.
Specifically, a malicious network could leak information by
manipulating the order or timing of message delivery.
However, such a network could by definition communicate information to its
 co-conspirators anyway.
Information can also be leaked through the order or timing of start
events.
This problem is beyond the scope of this work.

Conditioning the premise of the security condition on the
indistinguishability of information that is allowed to be released is
an idea that has been used earlier~\cite{sm04}, but not
in this way, to our knowledge.

In our hospital example, as illustrated in \cref{fig:secure-hospital}, 
 the system determines which of {\color{blue}\Patsy}'s transactions (the one
 with the dashed events, or the one without the dashed events) will run 
 based on whether \texttt{p.hasHiv} is \texttt{true}.
We can consider \texttt{p.hasHiv}'s value to be a high-security event that
 happens before all reads of \texttt{p.hasHiv}.
If we classify this past high-security event as input, and all low-security
 events as low-observable for {\color{red}\Attacker}, then we must ensure that 
 when {\color{blue}\Patsy}'s code runs, the low-security  projections of 
 resulting executions are always the same, regardless of whether 
 \texttt{p.hasHiv}.
{\color{blue}\Patsy}'s possible transactions in \cref{fig:secure-hospital} 
 allow for observational determinism, while her transactions in 
 \cref{fig:insecure-hospital} do not, since whether or not {\color{blue}Read 
 address} occurs depends on \texttt{p.hasHiv}.
Whether or not the system actually maintains observational determinism, 
 however, depends on the protocol scheduling the events.
\ICS{This is pretty much a repeat of the version for possibilistic 
 noninterference, but since that's in the tech report, I copied the example 
 here.}

\begin{definition}[\ProtocolSecurity]
  \label{def:secureprotocol}
  A protocol is considered \ti{\ProtocolSecurityAdjective} if the set of resulting
   executions satisfies \relaxedsecurity for any allowed
   sets of \TransactionSecurityAdjective transactions and any possible NIEs.
\end{definition}


\section{Impossibility}
\label{sec:impossibility}
\label{sec:introimpossible}
One of our contributions is to show that even in the absence
of timing channels, there is a fundamental conflict between secure
noninterference and serializability. Previous results showing such a
conflict, for example the work of Smith et al.~\cite{Smith1996}
consider only confidentiality and show only that timing channels are
unavoidable.

\begin{theorem}[Impossibility]
No \ProtocolSecurityAdjective protocol\footnote{barring unforeseen cryptographic capabilities 
 (\cref{sec:crypto})} can serialize all possible sets of \TransactionSecurityAdjective 
 transactions.\footnote{In fact, what we prove is stronger.
Our proof holds for even
 possibilistic security conditions~\cite{mccu88}, which are
 weaker than \relaxedsecurity (see \supplemental).
 No protocol whose resulting traces satisfy even this weaker condition
 can serialize all sets of \TransactionSecurityAdjective transactions.}
\label{theorem:impossibility}
\end{theorem}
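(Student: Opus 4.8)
The plan is to exhibit a single ``bad'' pair of possible transactions $\mathbf R$ and $\mathbf B$ that no \ProtocolSecurityAdjective{} protocol can serialize, even over the two-point lattice $\mathtt L\less\mathtt H$. I would build $\mathbf R$ and $\mathbf B$ so they conflict \emph{twice}: once at a pair of $\mathtt H$-labeled events $r_{\mathtt H},b_{\mathtt H}$ at a location $A$ whose label permits $\mathtt H$, and once at a pair of $\mathtt L$-labeled events $r_{\mathtt L},b_{\mathtt L}$. In each transaction, the $\mathtt H$-event and the $\mathtt L$-event are incomparable under $\before$ and depend only on that transaction's ($\mathtt L$-labeled) start event, so every intra-transaction $\before$ edge runs $\mathtt L$-to-$\mathtt L$ or $\mathtt L$-to-$\mathtt H$; hence both transactions are \TransactionSecurityAdjective. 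Additionally I make $r_{\mathtt H}$ (resp.\ $b_{\mathtt H}$) a message-receipt event, so it cannot be scheduled until its $\mathtt H$-labeled network-delay NIE $d_R$ (resp.\ $d_B$) has occurred.

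Suppose for contradiction that $p$ is a \ProtocolSecurityAdjective{} protocol that serializes every set of \TransactionSecurityAdjective{} transactions --- in particular $\cb{\mathbf R,\mathbf B}$ --- and that, being a protocol, is live: every started transaction must eventually finish, and start events of any subset of the possible transactions may occur at any time. The heart of the argument is a scheduling lemma: there are two full executions $E$ and $E'$ of $p$ that differ \emph{only} in the $\mathtt H$-labeled NIEs $d_R,d_B$ (in $E$, $d_R$ occurs very late and $d_B$ early; in $E'$, the reverse), in which $p$ resolves the $\mathtt H$-conflict oppositely: $b_{\mathtt H}\before r_{\mathtt H}$ in $E$ but $r_{\mathtt H}\before b_{\mathtt H}$ in $E'$. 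To see this for $E$: before $d_R$ occurs, location $A$ cannot distinguish ``$\mathbf R$ has started but its message is merely slow'' from ``$\mathbf R$ never starts''; in the latter case liveness forces $\mathbf B$ alone to finish, so $p$ must eventually schedule $b_{\mathtt H}$ without waiting for $r_{\mathtt H}$ --- and we delay $d_R$ past that point, so $b_{\mathtt H}$ goes first (then, by liveness for $\mathbf R$, $r_{\mathtt H}$ afterwards). $E'$ is symmetric. Because $E$ and $E'$ have equal (empty) start states and differ only in $\mathtt H$-labeled NIEs, $\NIE\p{E}\approx_{\mathtt L}\NIE\p{E'}$, so, since $p$ is \ProtocolSecurityAdjective{}, $E\approx_{\mathtt L}E'$; in particular $p$ resolves the $\mathtt L$-conflict between $r_{\mathtt L}$ and $b_{\mathtt L}$ identically in the two executions --- say $r_{\mathtt L}\before b_{\mathtt L}$, without loss of generality. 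But then in $E$ we have $b_{\mathtt H}\before r_{\mathtt H}$ (so $\mathbf B\before\mathbf R$) and $r_{\mathtt L}\before b_{\mathtt L}$ (so $\mathbf R\before\mathbf B$): happens-before on transactions has a $2$-cycle, contradicting the strictness demanded by \cref{definition:serializability}.

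I expect the scheduling lemma to be the main obstacle: one must make precise that liveness, asynchrony, and the freedom for a transaction never to start together \emph{force} $p$ to resolve the $\mathtt H$-conflict in whichever direction the (invisible, $\mathtt H$-labeled) relative timing of $d_R$ and $d_B$ dictates. This is exactly what defeats the obvious ``fix'' of having $p$ commit to a fixed serialization order at $A$: such a $p$ would have to keep waiting for $r_{\mathtt H}$'s message even in runs where it never arrives, starving $\mathbf B$. Finally, for the stronger claim of the footnote, the \emph{same} two executions suffice under merely possibilistic security: instantiating that condition with $E_1=E$ and $E_2=E'$ produces a protocol execution $E_3$ that has $E'$'s $\mathtt H$-inputs --- hence, by the scheduling lemma, $\mathtt H$-resolution $r_{\mathtt H}\before b_{\mathtt H}$ --- yet $E$'s low events --- hence $b_{\mathtt L}\before r_{\mathtt L}$, which serializability forces on $E$ --- so $E_3$ again closes a $2$-cycle and is a non-serializable execution of $p$. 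All of this assumes no cryptographic mechanism lets $A$ convey the $\mathtt H$-ordering downward unobservably at level $\mathtt L$, the caveat flagged in the footnote.
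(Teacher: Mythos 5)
There is a genuine gap, and it lies exactly where you predicted: the ``scheduling lemma'' fails for the counterexample you chose, because your two conflict sites sit at \emph{comparable} labels $\mathtt L\less\mathtt H$. Since information may flow from $\mathtt L$ to $\mathtt H$, a protocol can resolve the conflict once at the low level and propagate that decision upward. Concretely: the start events of $\mathbf R$ and $\mathbf B$ are $\mathtt L$-labeled, so the location hosting $r_{\mathtt L},b_{\mathtt L}$ (or any $\mathtt L$-visible coordinator already in the system --- no new trusted party is needed) learns which transactions have started, fixes an order between them based purely on $\mathtt L$-observable information, and sends that verdict to $A$ over a permitted $\mathtt L\!\to\!\mathtt H$ channel. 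Location $A$ then schedules $r_{\mathtt H}$ and $b_{\mathtt H}$ in the dictated order, waiting for whichever message has not yet arrived. This does not starve anyone: $A$ only ever waits for a message after being told (at level $\mathtt L$) that the corresponding transaction has started, at which point that message is guaranteed to be in flight, so no state is reached from which the transaction \emph{cannot} finish. Your rebuttal only addresses an \emph{a priori} fixed order at $A$; it does not address an order decided dynamically at $\mathtt L$ and flowed up. Under this protocol the $\mathtt H$-conflict is resolved identically in your $E$ and $E'$ regardless of the order of $d_R$ and $d_B$, so no $2$-cycle arises, and your pair of transactions is in fact securely serializable. The downstream steps (using \relaxedsecurity{} to pin the $\mathtt L$-order across $E,E'$, and the possibilistic variant for the footnote) are fine, but they never get off the ground without the forced flip of the $\mathtt H$-conflict.

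The paper's counterexample is built precisely to block this escape: the two locations that must each order a conflicting pair (Alice and Bob) are mutually non-influencing and mutually confidential, and the two transaction owners (Carol and Dave) are mutually confidential, so \emph{no} permitted flow can carry a consistent ordering decision to both conflict sites, nor can any single party even learn that both transactions exist. The ``any subset of start events may be scheduled'' axiom then forces each site, in some execution where it has seen only one transaction, to commit that transaction unilaterally --- and asynchrony lets those executions be glued into one where the two sites commit inconsistently. To repair your proof you would need to replace your chain $\mathtt L\less\mathtt H$ with incomparable labels (or otherwise ensure the decision at one conflict site cannot legally flow to the other), at which point your argument essentially becomes the paper's.
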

We assume protocols cannot simply introduce an arbitrarily trusted
third party; a protocol must be able to run using only the set of
locations that have events being scheduled.
\ifreport
\begin{proof}
(by counterexample)
\else

\begin{proof}[sketch]
\fi
\setlength{\belowcaptionskip}{-3mm} 
\begin{figure}[t]
\centering
 \begin{tikzpicture} [scale=1,auto=center]
   \def\labelsize{\large}
   \def\nodesize{7mm}
   \node[rounded corners=2mm, draw, fill=blue!10, minimum height=29mm,minimum width=20mm,text width=20mm,text height=25mm,label=Carol] (carol) at (6,3) 
     {\ \hfill \includegraphics[height=25mm]{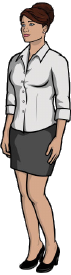}};
   \node[rounded corners=2mm, draw, fill=red!10, minimum height=29mm,minimum width=20mm,text width=20mm,text height=20mm,label=Dave] (bank) at (0,3) 
   {\hspace{-1.5mm}\includegraphics[height=20mm]{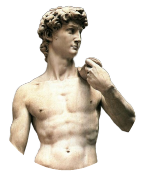}};
   \node[inner sep=0pt] (alicecloud) at (3,5.5) 
   {\includegraphics[height=20mm]{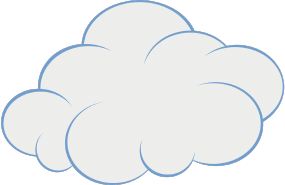}};
   \node[inner sep=0pt,label={right:Alice}] (alice) at (4.7,5.6) 
   {\includegraphics[height=20mm]{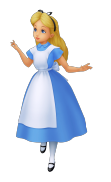}};
   \node[inner sep=0pt] (bobcloud) at (3,.5) 
   {\includegraphics[height=20mm]{icons/Cloud.pdf}};
   \node[inner sep=0pt,label={left:Bob}] (bob) at (1.2,0.3) 
   {\includegraphics[height=17mm]{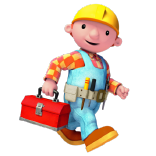}};

  \tikzstyle{every node} = [circle,
      draw={rgb:red,1;black,1},
      inner sep=0pt,
      line width=1pt,
      fill=red,
      minimum size=\nodesize]
  \node (r0) at (.6,3.9) {\color{white}\labelsize$\mathbf{r_0}$};
  \node (r1) at (.6,2.1) {\color{white}\labelsize$\mathbf{r_1}$};
  \node (r2) at (2.3,5.3) {\color{white}\labelsize$\mathbf{r_2}$};
  \node (r3) at (2.3,.5) {\color{white}\labelsize$\mathbf{r_3}$};
  \foreach \from/\to in {r0/r2,r1/r3}
      \draw [red,fill=white,->,line width=1pt,>=open triangle 60] (\from) -- (\to);

  \tikzstyle{every node} = [circle,
      inner sep=0pt,
      draw=blue,
      line width=1pt,
      fill=blue,
      minimum size=\nodesize]
  \node (b0) at (5.4,3.9) {\color{white}\labelsize$\mathbf{b_0}$};
  \node (b1) at (5.4,2.1) {\color{white}\labelsize$\mathbf{b_1}$};
  \node (b2) at (3.7,5.3) {\color{white}\labelsize$\mathbf{b_2}$};
  \node (b3) at (3.7,.5) {\color{white}\labelsize$\mathbf{b_3}$};
  \foreach \from/\to in {b0/b2,b1/b3}
      \draw [blue,fill=white,->,line width=1pt,>=open triangle 60] (\from) -- (\to);

  \node[inner sep=0pt,fill=none,draw=none] (wall) at (3,3) 
   {\includegraphics[height=30mm]{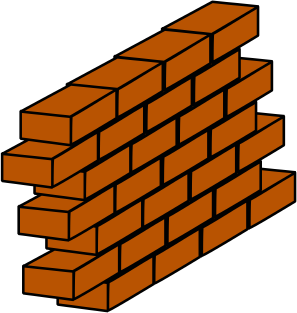}};

   \node[inner sep=0pt,minimum size=0pt,label={\LARGE\tb?}] (orderbob) at (3,.5) {};
   \node[inner sep=0pt,minimum size=0pt] (orderbob0) at (3.1,.5) {};
   \node[inner sep=0pt,minimum size=0pt] (orderbob2) at (2.9,.5) {};
   \node[inner sep=0pt,minimum size=0pt,label={\LARGE\tb?}] (order) at (3,5.3) {};
   \node[inner sep=0pt,minimum size=0pt] (order0) at (3.1,5.3) {};
   \node[inner sep=0pt,minimum size=0pt] (order2) at (2.9,5.3) {};
   \foreach \from/\to in {order0/r2,order2/b2,orderbob0/r3,orderbob2/b3}
      \draw [black,->,fill=white,line width=1pt,>=open triangle 60] (\from) -- (\to);
\end{tikzpicture}
\caption{
  Transactions that cannot be securely serialized.
  Dave's transaction includes {\color{red}$\mathbf{r_0}$}, 
   {\color{red}$\mathbf{r_1}$}, {\color{red}$\mathbf{r_2}$},  and
   {\color{red}$\mathbf{r_3}$}.
  Carol's includes
   {\color{blue}$\mathbf{b_0}$}, {\color{blue}$\mathbf{b_1}$}, 
   {\color{blue}$\mathbf{b_2}$}, and {\color{blue}$\mathbf{b_3}$}.
  Cloud providers Alice and Bob must decide how to order their events.
  Alice and Bob may not influence each other, and Carol and Dave may not
   influence each other, as represented by the wall.
  For these transactions to be serializable, Alice's ordering of
   {\color{red}$\mathbf{r_2}$} and  {\color{blue}$\mathbf{b_2}$} must agree with 
   Bob's ordering of {\color{red}$\mathbf{r_3}$} and  
   {\color{blue}$\mathbf{b_3}$}.
}
\label{fig:impossible}
\end{figure}
\setlength{\belowcaptionskip}{\correctbelowcaptionskip}
Consider the counterexample shown in \cref{fig:impossible}.
Alice and Bob are both cloud computing providers who keep strict logs of the 
 order in which various jobs start and stop.
Highly trusted (possibly government) auditors may review these logs, and check 
 for consistency, to ensure cloud providers are honest and fair.
As competitors, Alice and Bob do not want each other to gain any information 
 about their services, and do not trust each other to affect their own 
 services.

Carol and Dave are presently running jobs on Alice's cloud.
Both Carol and Dave would like to stop their jobs on Alice's cloud, and start 
 new ones on Bob's cloud.
Each wants to do this atomically, effectively maintaining exactly one running 
 job at all times.
Carol and Dave consider their jobs to be somewhat confidential; they do not 
 want each other to know about them.
Unlike the example from \cref{fig:rainforest}, Dave and Carol's  
 transactions do not go through a third party like Rainforest.
For the transactions to be serializable, Alice's ordering of the old jobs 
 stopping must agree with Bob's ordering of the new jobs starting.

\ifreport
These transactions feature at least 8 events:

\begin{tabular}{r l}
  {\color{red}$\mathbf{r_0}$}:& Dave sends a message to Alice
\\{\color{red}$\mathbf{r_1}$}:& Dave sends a message to Bob
\\{\color{red}$\mathbf{r_2}$}:& Alice receives a message from Dave, ending a job.
\\{\color{red}$\mathbf{r_3}$}:& Bob receives a message from Dave, beginning a job.
\\{\color{blue}$\mathbf{b_0}$}:& Carol sends a message to Alice
\\{\color{blue}$\mathbf{b_1}$}:& Carol sends a message to Bob
\\{\color{blue}$\mathbf{b_2}$}:& Alice receives a message from Carol, ending a job.
\\{\color{blue}$\mathbf{b_3}$}:& Bob receives a message from Carol, beginning a job.
\end{tabular}

No events at Alice's location should influence events at Bob's location, and 
 vice-versa.
No events at Carol's location should influence events at Dave's location, and 
 vice-versa. 

Alice and Bob must each finish with ordered logs including job beginnings and
 endings.
This means they must assign a happens-before $\p\before$ relation to their 
 events above.
For these transactions to be serializable, Alice's ordering of
 {\color{red}$\mathbf{r_2}$} and  {\color{blue}$\mathbf{b_2}$} must agree with 
 Bob's ordering of {\color{red}$\mathbf{r_3}$} and  
 {\color{blue}$\mathbf{b_3}$}.

\begin{lemma}{These transactions are \TransactionSecurityAdjective.}\\
  The two transactions in \cref{fig:impossible} are \TransactionSecurityAdjective
  (\cref{def:securetransaction}).
\end{lemma}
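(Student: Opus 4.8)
The plan is to verify \cref{def:securetransaction} head-on: list every happens-before edge that lies \emph{inside} one of the two transactions, and then exhibit a label assignment under which each such edge runs upward in the lattice. Because being \TransactionSecurityAdjective constrains only the ordering of a transaction's own events, the wall, the cloud providers' scheduling, and all cross-transaction dependencies are irrelevant to this lemma, so the argument is short; moreover the Carol/Dave symmetry means only one of the two transactions needs to be checked in detail.

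First I would read the intra-transaction edges off \cref{fig:impossible}. Dave's transaction contributes exactly the two send-before-receive edges $r_0\before r_2$ and $r_1\before r_3$, plus the edges from Dave's start event to each of $r_0,r_1,r_2,r_3$ (start events precede all other events of their transaction by definition). There is no edge between $r_2$ and $r_3$ --- Alice and Bob exchange no messages --- and none crossing between the two message paths, so these are all the constraints. Carol's transaction has the identical shape with $b_i$ in place of $r_i$.

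Next I would fix labels. The conflict assumption of \cref{sec:conflict} forces the two ``end job at Alice'' receipts to carry one common label, say $\ell\p{r_2}=\ell\p{b_2}=A$, and likewise $\ell\p{r_3}=\ell\p{b_3}=B$; I take $A$ and $B$ incomparable, which is precisely what makes Alice's and Bob's logs mutually unreadable (and is what the impossibility proof then exploits). Every remaining event of Dave's transaction gets a single label $D$ and every remaining event of Carol's a single label $C$, chosen so that $C$ and $D$ are incomparable while $C\less A$, $C\less B$, $D\less A$, $D\less B$. Such a configuration exists --- for instance, take the sublattice of $2^{\{w,x,y,z\}}$ with $C=\{w\}$, $D=\{x\}$, $A=\{w,x,y\}$, $B=\{w,x,z\}$, or instantiate the DLM with principals for Alice, Bob, Carol, Dave, and the auditors --- and one checks the handful of location-hosting constraints are met. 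Now every intra-transaction edge runs from a label in $\{C,D\}$ to a label in $\{A,B,C,D\}$, and each such ordered pair satisfies $\less$ by construction (the same-label pairs by reflexivity), so \cref{def:securetransaction} holds for both transactions.

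The one step that needs care is the middle of that last paragraph: exhibiting a \emph{single} lattice assignment that simultaneously respects the same-label-for-conflicts rule, the requirement that each event fit its location, and the two incomparabilities the counterexample relies on. I would dispatch this not abstractly but by displaying the explicit small lattice above and checking the finitely many inequalities directly.
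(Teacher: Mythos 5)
Your proposal is correct and follows essentially the same route as the paper, whose entire proof is the observation that the only intra-transaction happens-before edges are the four message send/receive pairs, each of which ``explicitly carries information readable to the recipient'' and is therefore a permitted flow. The only difference is one of rigor, not of approach: you additionally account for the start-event edges and exhibit a concrete lattice witnessing that the required label assignment (conflicting receipts sharing labels, $A$ and $B$ incomparable, $C$ and $D$ incomparable) actually exists, details the paper asserts implicitly.
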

\begin{proof}
The only happens-before relationships within transactions are for the sending 
 and receipt of messages, explicitly carrying information readable to the 
 recipient.
All four are consistent with permitted information flows.
\end{proof}

\begin{lemma}{}
  No protocol can securely serialize these transactions.
  Specifically, no protocol accepting these transactions can preserve 
   possibilistic noninterference.
\end{lemma}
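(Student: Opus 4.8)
The plan is to argue by contradiction and reduce to the structural obstruction already pointed out for \cref{fig:impossible}. Assume some \ProtocolSecurityAdjective{} protocol $P$ serializes these transactions while respecting the model's liveness requirement (every started transaction eventually finishes, every schedulable start event can be scheduled). Recall that because $\mathbf{r_2}$ conflicts with $\mathbf{b_2}$ at Alice and $\mathbf{r_3}$ conflicts with $\mathbf{b_3}$ at Bob, serializability forces the happens-before edge Alice places between $\mathbf{r_2}$ and $\mathbf{b_2}$ to point the same way as the edge Bob places between $\mathbf{r_3}$ and $\mathbf{b_3}$: if $\mathbf{b_2}\before\mathbf{r_2}$ at Alice then Dave's transaction $\mathbf R$ directly depends on Carol's $\mathbf B$, and if also $\mathbf{r_3}\before\mathbf{b_3}$ at Bob then $\mathbf B$ directly depends on $\mathbf R$, yielding a cycle in transaction-level happens-before and hence non-serializability. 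So every $P$-consistent execution scheduling both transactions commits to one of exactly two consistent orientations, ``$\mathbf R$ first'' or ``$\mathbf B$ first,'' and I need to manufacture an execution in $P$'s execution set in which Alice commits to one orientation and Bob to the other.

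To do this I would use the possibilistic-noninterference closure property (which, per the footnote to \cref{theorem:impossibility}, is weaker than \relaxedsecurity{} and therefore suffices) to ``graft'' together two single-transaction executions. Let $E_{\mathbf R}$ be a $P$-consistent execution in which only Dave's transaction runs (its start event is an NIE; Carol's start event simply does not occur); by liveness, Alice eventually schedules $\mathbf{r_2}$ and Bob eventually schedules $\mathbf{r_3}$. Let $E_{\mathbf B}$ be the mirror execution in which only Carol's transaction runs, so Alice schedules $\mathbf{b_2}$ and Bob schedules $\mathbf{b_3}$. Taking Alice as the observer, the existence and timing of Carol's transaction at Bob (her message $\mathbf{b_1}\!\to\!\mathbf{b_3}$, and the network-delay NIE governing when $\mathbf{b_0}$ becomes deliverable at Alice) are high inputs for Alice, because the ``wall'' of \cref{fig:impossible} keeps Bob's events and the Alice-side delay invisible below Alice's label. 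Noninterference then lets me carry those high inputs into an execution whose Alice-low events match $E_{\mathbf R}$ up to the moment Alice schedules $\mathbf{r_2}$ — concretely, I choose the network-delay NIE so that Carol's message to Alice arrives only after that point — yielding a $P$-consistent execution in which both transactions run yet Alice has already oriented $\mathbf{r_2}\before\mathbf{b_2}$. Symmetrically, taking Bob as the observer and grafting Dave's (high, suitably delayed) transaction onto $E_{\mathbf B}$, Bob has already oriented $\mathbf{b_3}\before\mathbf{r_3}$. Since the Alice-side and Bob-side network delays are independent inputs, composing the two graftings gives a single execution that noninterference forces into $P$'s execution set, in which Alice's orientation is ``$\mathbf R$ first'' and Bob's is ``$\mathbf B$ first'' — the non-serializable combination, contradicting the assumption on $P$.

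It is worth stating explicitly why $P$ cannot escape this: Alice's ordering decision may depend only on data below Alice's label, which by the wall excludes everything at Bob and in particular Bob's decision, and symmetrically for Bob; and neither Carol nor Dave can relay the joint order, because a message letting Carol learn that Dave has a transaction (or vice versa) would itself breach the confidentiality that \cref{fig:impossible} requires between them, so neither participant ever knows which global orientation is needed. Liveness rules out the remaining escape of having Alice or Bob wait indefinitely for more information. The main obstacle in turning the above into a rigorous proof is the bookkeeping around the repeated use of the possibilistic-NI closure property: one must (i) pin down concrete assignments of the start-event and network-delay NIEs in each intermediate execution so that each local decision of Alice and of Bob is genuinely \emph{determined} by exactly the inputs that NI preserves (rather than possibly flipped by some low input that differs between the grafted executions), (ii) check that the two high/low partitions — one for Alice as observer, one for Bob — are each realizable in the lattice underlying \cref{fig:impossible}, and (iii) verify that the composite execution is actually a member of $P$'s resulting execution set, so that $P$'s serializability guarantee genuinely applies to it and the contradiction lands. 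Steps (i) and (iii) carry essentially all the weight; the remainder is routine.
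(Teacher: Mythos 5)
Your proposal is correct and follows essentially the same route as the paper's own proof: the paper likewise reaches the half-delivered intermediate state (each of Alice and Bob has seen only one transaction's message), notes that the wall and possibilistic noninterference forbid relaying the missing information, and invokes the requirement that any subset of start events may be scheduled to force each side to commit the transaction it has seen first, yielding the inconsistent pair of orderings. Your explicit ``grafting'' of the two single-transaction executions via the NI closure property is just a more carefully bookkept rendering of that same argument.
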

\begin{proof}
\fi
In any system with an asynchronous network, it is possible to reach a   state 
 in which Carol's message to Alice has arrived, but not her message to Bob, 
 and Dave's message to Bob has arrived, but not his message to Alice.
\ifreport
In other words, events {\color{red}$\mathbf{r_2}$} and 
 {\color{blue}$\mathbf{b_3}$} have not yet occurred.
\cref{fig:impossiblehalf} illustrates this situation.
\fi
In this state, neither Alice nor Bob can know whether one or both transactions 
 have begun. 
It is impossible for either to communicate this information to the other without violating 
\ifreport
 possibilistic noninterference. 
\else
 \relaxedsecurity.
\fi
Specifically, any protocol that relayed such information from one cloud provider to the 
 other would allow the recipient to distinguish the order of message delivery to the 
 other cloud provider.
That ordering is considered secret input,
and so this would be a security violation.
All executions with identical start states, and identical inputs visible to 
 Alice, but differently ordered network delay events at Bob, which are inputs 
 invisible to Alice, would become distinguishable to Alice.
\ifreport
Even possibilistic noninterference would therefore be violated 
 (\cref{sec:possibilistic}).

\begin{figure}[t]
\centering
 \begin{tikzpicture} [scale=1,auto=center]
   \def\labelsize{\large}
   \def\nodesize{7mm}
   \node[rounded corners=2mm, draw, fill=blue!10, minimum height=29mm,minimum width=20mm,text width=20mm,text height=25mm,label=Carol] (carol) at (6,3) 
     {\ \hfill \includegraphics[height=25mm]{icons/Carol.pdf}};
   \node[rounded corners=2mm, draw, fill=red!10, minimum height=29mm,minimum width=20mm,text width=20mm,text height=20mm,label=Dave] (bank) at (0,3) 
   {\hspace{-1.5mm}\includegraphics[height=20mm]{icons/Dave.pdf}};
   \node[inner sep=0pt] (alicecloud) at (3,5.5) 
   {\includegraphics[height=20mm]{icons/Cloud.pdf}};
   \node[inner sep=0pt,label={right:Alice}] (alice) at (4.7,5.6) 
   {\includegraphics[height=20mm]{icons/Alice.pdf}};
   \node[inner sep=0pt] (bobcloud) at (3,.5) 
   {\includegraphics[height=20mm]{icons/Cloud.pdf}};
   \node[inner sep=0pt,label={left:Bob}] (bob) at (1.2,0.3) 
   {\includegraphics[height=17mm]{icons/Bob.pdf}};

  \tikzstyle{every node} = [circle,
      draw={rgb:red,1;black,1},
      inner sep=0pt,
      line width=1pt,
      fill=red,
      minimum size=\nodesize]
  \node (r0) at (.6,3.9) {\color{white}\labelsize$\mathbf{r_0}$};
  \node (r1) at (.6,2.1) {\color{white}\labelsize$\mathbf{r_1}$};
  \node (r3) at (2.3,.5) {\color{white}\labelsize$\mathbf{r_3}$};
  \foreach \from/\to in {r1/r3}
      \draw [red,fill=white,->,line width=1pt,>=open triangle 60] (\from) -- (\to);

  \tikzstyle{every node} = [circle,
      inner sep=0pt,
      draw=blue,
      line width=1pt,
      fill=blue,
      minimum size=\nodesize]
  \node (b0) at (5.4,3.9) {\color{white}\labelsize$\mathbf{b_0}$};
  \node (b1) at (5.4,2.1) {\color{white}\labelsize$\mathbf{b_1}$};
  \node (b2) at (3.7,5.3) {\color{white}\labelsize$\mathbf{b_2}$};
  \foreach \from/\to in {b0/b2}
      \draw [blue,fill=white,->,line width=1pt,>=open triangle 60] (\from) -- (\to);

  \node[inner sep=0pt,fill=none,draw=none] (wall) at (3,3) 
   {\includegraphics[height=30mm]{icons/Wall.pdf}};

\end{tikzpicture}
\caption{An intermediate state of an execution featuring the transactions from 
   \cref{fig:impossible}.}
\label{fig:impossiblehalf}
\end{figure}

Additionally, we have assumed that a protocol must be able to schedule any 
 subset of the allowed transactions' start events.
Therefore valid executions exist in which, say, only Carol's transaction runs, 
 so Alice receives only information about Carol's transaction, and commits 
 Carol's transaction first.
Therefore a valid execution must exist in which Alice commits Carol's 
 transaction first, before receiving any further input from Dave or Bob, and 
 likewise, Bob commits Dave's transaction first, without further input from 
 Carol or Alice. 
Thus any protocol satisfying possibilistic noninterference can schedule 
 inconsistently: the transactions cannot be securely serialized.
\end{proof}

Thus, with this scenario as a counterexample, no \ProtocolSecurityAdjective protocol can serialize 
 all possible sets of \TransactionSecurityAdjective transactions. 
\end{proof}
\else
\ICS{Well, we've really crunched down this not-tech-report proof sketch.}
\end{proof}
\fi

\subsection{Cryptography}
\label{sec:crypto}
This essentially information-theoretic argument does not account for the possibility that some protocol could 
 produce \ti{computationally indistinguishable} traces that are
 low-distinguishable with sufficient computational power (e.g., to break encryption).
However, we are unaware of any cryptographic protocols that would permit 
 Alice and Bob to learn a consistent order in which to schedule events
 without learning each other's confidential information.

\section{Analysis}
\label{sec:analysis}
Although secure scheduling is impossible in general, many sets of transactions can 
 be scheduled securely.
We therefore investigate which conditions are sufficient for secure 
 scheduling, and what protocols can function securely under these conditions.

\subsection{Monotonicity}
A relatively simple condition suffices to guarantee schedulability, while 
 preserving \relaxedsecurity:
\begin{definition}[Monotonicity]
  A transaction is \ti{monotonic} if it is \TransactionSecurityAdjective
  and its events are totally ordered by
  happens-before $\p\before$.
\label{definition:monotonicity}
\end{definition}
\begin{theorem}[Monotonicity $\Rightarrow$ Schedulability]\ \\
\label{theorem:monotonicity}
A protocol exists that can serialize any set of monotonic transactions 
 and preserve \relaxedsecurity.
\end{theorem}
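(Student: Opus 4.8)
The plan is to exhibit such a protocol explicitly and then verify the two requirements: correctness --- serializability of every set of monotonic transactions, together with the liveness the model demands (any subset of start events is schedulable, and every started transaction can finish) --- and security, i.e.\ \relaxedsecurity (\cref{def:secureprotocol}). The protocol I would build is a skeletal version of the staged-commit protocol of \cref{sec:protocols}. Since a monotonic transaction's events are totally ordered by $\before$ with non-decreasing labels (by \TransactionSecurityAdjective-ness, $e_1\before e_2$ forces $\ell\p{e_1}\less\ell\p{e_2}$), each transaction decomposes canonically into a sequence of \emph{stages}: maximal runs of consecutive events sharing one label, with strictly increasing stage labels $\ell_1,\ell_2,\dots$. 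A transaction runs its stages in order; stage $j$ is a mini-transaction living entirely at label $\ell_j$, and it commits and releases its resources before stage $j{+}1$ begins. Two structural facts follow from monotonicity and the conflict assumptions: (i) conflicting events share a label and a location, so every conflict between two transactions happens inside a single stage-label --- conflicts never cross labels; and (ii) the only causal edge from one stage into the next, ``stage $j$ done, start stage $j{+}1$'', points from $\ell_j$ up to $\ell_{j+1}$ and is therefore a permitted flow.

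Within a fixed label $\ell$ there are no information-flow constraints to respect (every event and every participating location sits at $\ell$), so I would serialize the $\ell$-stages with a plain turn-based scheme hosted at some location trusted at $\ell$ (one exists: any location carrying an $\ell$-labeled event), letting at most one $\ell$-stage run at a time, which rules out deadlock outright. The central difficulty is \emph{global} serializability --- making the per-label orders mutually consistent so that the happens-before relation they induce on transactions is acyclic. A naive ``each label decides on its own'' protocol fails here: with an adversarial network two transactions can be ordered one way at a low label and the opposite way at a higher label, producing a cycle. The repair, and the step I expect to be delicate, uses that when two transactions conflict at several labels those labels are all stage labels of both, hence mutually comparable; the coordinator of their \emph{least} common conflict label fixes their relative order and propagates that decision up to the higher coordinators (a flow from a lower label to a higher one, hence secure), which must obey it. Proving that this yields an acyclic order is the heart of the argument: I would show a cycle forces, via monotonicity, a strictly increasing chain of ``positions at coordinators'' that nonetheless returns to its starting value --- a contradiction --- and the subtle point to handle carefully is that two conflict labels are comparable only because they lie on a common transaction's chain, so the comparability does not transport freely around the cycle.

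For security I would verify \relaxedsecurity directly. Fix the start state and the NIEs; I claim the level-$\ell$ projection of the resulting execution is then uniquely determined, which I would prove by induction along the stage-label order. A stage at label $\ell$ is triggered only by the completion of its transaction's previous, strictly lower-labeled stage (an edge into $\ell$ from below), and its schedule is decided by the $\ell$-coordinator from data at labels at or below $\ell$ plus the relevant NIEs, with ties broken deterministically using the (NIE-derived) transaction identities; nothing a level-$\ell$ observer can see ever depends on a stage whose label is not at or below $\ell$, because stage labels only increase and committed stages are never revisited. Hence two executions with $S\approx_\ell S'$ and $\NIE\p E\approx_\ell\NIE\p{E'}$ agree on all events visible at $\ell$, i.e.\ $E\approx_\ell E'$.

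Liveness is then comparatively routine: since each label's stages run one at a time, no resource is ever held while waiting on another, so there is no deadlock; each transaction has finitely many stages, each of which eventually gets its turn, so every started transaction finishes; and because the protocol never requires one transaction to precede another a priori, start events of any allowed subset of transactions may appear and be served. Assembling acyclicity (serializability, \cref{definition:serializability}), \relaxedsecurity, and liveness yields the theorem. The main obstacle, to repeat, is the global-serializability composition argument; the security and liveness parts fall out fairly directly from the stage decomposition.
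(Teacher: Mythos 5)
There is a genuine gap, and it sits exactly where you flagged it: global serializability. Your protocol has each stage \emph{commit and release its resources} before the next stage begins, which abandons two-phase locking; you then try to recover acyclicity by having the coordinator of the least common conflict label of a conflicting pair fix their order and propagate that decision upward. This repairs only \emph{direct} conflicts between a single pair. Serializability (\cref{definition:serializability}) requires the \emph{transitive closure} of $\prec$ to be acyclic, and a cycle $T_1\before T_2\before\dots\before T_n\before T_1$ can route each adjacent pair's conflict through a different label $m_i$. Consecutive $m_i,m_{i+1}$ are comparable (they lie on $T_{i+1}$'s chain), but non-adjacent ones need not be — as you yourself observe, comparability does not transport around the cycle. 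Concretely, with $n\ge 6$ one can arrange the $m_i$ as an alternating zigzag of local minima and pairwise-incomparable local maxima; each transaction's lower conflict is decided before its higher one, so the temporally last decision is made at a local maximum that dominates only its two neighbors. That coordinator cannot securely learn the decisions made at the incomparable labels elsewhere in the cycle, cannot detect that its choice closes the cycle, and (by the model's requirement that no transaction be ordered before another \emph{a priori}) cannot fall back on a fixed global priority. Your ``strictly increasing chain of positions'' argument needs the labels around the cycle to be monotonically related, and they are not. The early release is the root cause: once stage $j$ has committed and freed its resources, a later transaction can overtake at label $\ell_j$ while being overtaken at $\ell_{j+1}$ by a third party.

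The paper's proof avoids all of this by not committing anything early. It uses one lock per label, acquired in label order (which, by monotonicity, is the order of the transaction's events and is the \emph{same} order for every transaction), and \emph{held until the entire transaction commits}, then released in reverse order. Ordered acquisition gives deadlock freedom; strict two-phase locking gives serializability by a two-line argument (the first conflicting event fixes the order, and every conflicting event of the later transaction waits for the earlier transaction's commit); security follows because the low-observable prefix of each transaction is scheduled round-robin deterministically from low-visible state and NIEs, and the only protocol messages are commit replies whose arrival each participant already expects, so only timing — excluded from the model — varies. Your stage decomposition and upward-propagation machinery is closer in spirit to the staged commit protocol of \cref{sec:protocols}, but note that even SC never releases a precommitted stage: conflicting events remain blocked until the whole transaction commits. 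If you change ``commits and releases'' to ``precommits and continues to block conflicters until the transaction completes,'' your construction collapses back into the paper's, and the problematic composition argument disappears. Your security argument, by contrast, is fine in outline (and arguably cleaner, since with early commit there are no backward-flowing commit messages to explain away), but it cannot rescue a protocol that fails to serialize.
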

\ifreport
\begin{proof}
\else
\begin{proof}[sketch]
\fi
Monotonicity requires that each event must be allowed to influence all future 
 events in the transaction.
A simple, pessimistic transaction protocol can schedule such transactions 
 securely.
In order to define this protocol, we need a notion of \ti{locks} within our 
 model.

\noinunbo{Locks}
A lock consists of an infinite set of events for each allowed transaction.
A transaction \ti{acquires} a lock by scheduling any event from this set.
It \ti{releases} a lock by scheduling another event from this set.
Thus, in a system state $S$, a transaction $T$ \ti{holds} a lock if $S$ 
 contains an odd number of events from the lock's set corresponding to $T$.
No correct protocol should result in a state in which multiple transactions 
 hold the same lock.
All pairs of events in a lock conflict, so scheduled events that are part
 of the same lock must be totally ordered by happens-before $\p\before$.
All events in a lock share a location, which is considered to be the
 location of the lock itself.
Likewise, all events in a lock share a label, which is considered to be the
 label of the lock itself.

A critical property for transaction scheduling is
 \ti{deadlock freedom}~\cite{2phase,dinosaurbook},
which requires that a protocol can eventually schedule all events 
 from any transaction whose start event has been scheduled.
A system enters \ti{deadlock} when it reaches a state after which this is not 
 the case.
For example, deadlock happens if a protocol requires two transactions each to 
 wait until the other completes: both will wait forever.
%
If all transactions are finite sets of events (i.e., all 
 transactions can terminate), then deadlock freedom guarantees that a system 
 with a finite set of start events eventually terminates, a liveness property.
\ifreport
\ICS{RVR finds this sentence unnecessary, so I shunted it to the TR.}
Deadlock freedom is essential to distributed or parallel scheduling, but 
 notoriously difficult to get right~\cite{dinosaurbook}.
\fi

We now describe a deadlock-free protocol that can securely serialize
any set of monotonic
 transactions, and preserve \relaxedsecurity:

\begin{itemize}
\item Each event in each transaction has a corresponding lock, except start events.
\item Any events that have the same label share a lock, and this
  lock shares a location with at least one of the events.
  Conflicting events are assumed to share a label~(\cref{sec:serializability}).
\item A transaction must hold an event's lock to schedule that event.
\item A transaction acquires locks in sequence, scheduling events as it goes.
      Since all events are ordered according to a global security lattice, all
       transactions that acquire the same locks do so in the same order.
      Therefore they do not deadlock.
\item If a lock is already held, the transaction waits for it to be released.
\item When all events are scheduled, the transaction commits, releasing locks
   in reverse order. 
  Any messages sent as part of the transaction would thus receive a
  reply, indicating only that the message had been received,
   and all its repercussions committed.
  We call these replies \ti{commit} messages.
\item For each location, the protocol rotates between all uncommitted 
   transactions, scheduling any intermediate events (such as lock acquisitions) 
   until it either can schedule one event in the transaction or can make no 
   progress, and then rotates to the next transaction.
\end{itemize}
\noinunbo{Security Intuition}
Acquiring locks shared by multiple events on different locations requires a  
 commit protocol between those locations.
However, this does not leak information because all locations involved are 
 explicitly allowed to observe and influence all events involved.
Therefore several known commit protocols will do, including 2PC.
Since the only messages sent as part of the protocol are commit messages, and
 each recipient knows it will receive a commit message by virtue of sending a
 message in the protocol, no information (other than timing) is transferred by
 the scheduling mechanism itself.
\ifreport

\noinunbo{\Relaxedsecurity}
This protocol, implemented with monotonic transactions, satisfies \relaxedsecurity,
 our slightly relaxed version of observational determinism (\cref{sec:limited}).
We consider an event \ti{observable} to an observer with label $\ell$ if the
 label of the event flows to $\ell$.
For any two executions beginning with equivalent states (for some observer 
 $\ell$),
\[
 E_0\sqb0 \approx_\ell E_1\sqb0 
\]
If the executions $E_0$ and $E_1$ have the same $\ell$-observable inputs, which
 is to say transaction start events and network delay events, then the 
 protocol requires $E_0$ and $E_1$ to be indistinguishable to $\ell$.
The observer of label $\ell$ can only observe a prefix of each transaction being
 scheduled in a round-robin fashion, and commit messages for each arriving
 sometime thereafter.
Arrival time of these commit messages is considered an input, and so all 
 events visible in $E_0$ and $E_1$ are deterministic results of the events 
 visible in the start states, and the NIEs.
Each distinct state in an execution, as observed at $\ell$, will be 
 deterministically predicted by prior states and inputs.
Thus \relaxedsecurity is preserved.

\noinunbo{Serializability}
Transactions consist of totally ordered series of events.
Let $e_1$ be the first event in $T_1$ conflicting with any event in $T_2$.
Let $e_2$ be the event in $T_2$ with which $e_1$ conflicts. 
Suppose they are scheduled such that $e_1\before e_2$.
Therefore all events in $T_2$ after and including $e_2$ cannot be scheduled 
 until $T_1$ commits and releases its locks.
No event in $T_2$ scheduled before $e_2$ can conflict with an event in $T_1$ 
 after $e_1$, by monotonicity, or before $e_1$, by the definition of $e_1$.
Thus all conflicting events in $T_2$ are scheduled after all events in $T_1$, 
 so no event in $T_1$ can happen after an event in $T_2$.
Therefore, this pessimistic protocol ensures serializability.

\noinunbo{Liveness}
This scheduling system cannot result in deadlock, since all transactions 
 acquire locks in strictly increasing order on the lattice, so any set of 
 transactions that acquire the same locks must do so in the same order.
\ICS{Does this need to be further explained? If you're used to reasoning about 
 this sort of thing, it's pretty obvious... }
\ACM{This seems okay but I think it belongs in the main paper.}

Therefore, monotonicity is sufficient to guarantee secure schedulability.
\fi
\end{proof}

\subsection{Relaxed Monotonicity}
Monotonicity, while relatively easy to understand, is not the weakest 
 condition we know to be sufficient for secure schedulability.
It can be substantially relaxed.
In order to explain our weaker condition, \ti{relaxed monotonicity}, we first 
 need to introduce a concept we call \ti{visibility}:

\begin{definition}[Visible-To]
\label{definition:visible}
An event $e$ in transaction $T$ is \emph{visible to} a location $L$
 if and only if it happens at $L$, or if there exists another event $e^\prime\in 
 T$ at $L$, such that $e\before e^\prime$.
\end{definition}

\begin{definition}[Relaxed Monotonicity]
\label{definition:relaxedmonotonicity}
A transaction $T$ satisfies \ti{relaxed monotonicity} if it is \TransactionSecurityAdjective
and for each
location $L$, all events in $T$ visible to
 $L$ happen before all events in $T$ not visible to $L$.
\end{definition}

\JED{Would be nice to have some discussion of relaxed monotonicity
here, perhaps giving some intuition for why it is sufficient for
security.}
\TMreply{While I agree, I think that it'd come off as repetitive being this
close to the section where it's discussed.}
In \cref{sec:stagedcommit}, we demonstrate that relaxed monotonicity 
 guarantees schedulability.
Specifically, we present a staged commit protocol, and prove that it
 schedules any set of transactions satisfying relaxed monotonicity, while 
 preserving \relaxedsecurity 
 (\cref{theorem:stagedcommit}).

\subsection{Requirements for Secure Atomicity}
Monotonicity and relaxed monotonicity are sufficient conditions for a set of 
 transactions to be securely schedulable.
Some sets of transactions meet neither condition, but can be 
 securely serialized by some protocol.
For example, any set of transactions that each happen entirely at one 
 location can be securely serialized if each location schedules each 
 transaction completely before beginning the next.
We now describe a relatively simple condition that is necessary for any set 
 of transactions to be securely scheduled.

\subsubsection*{Decision Events and Conflicting Events}
In order to understand this necessary condition, we first describe 
 \ti{decision events} and \ti{conflicting events}.

Borrowing some terminology from Fischer, Lynch, and 
 Paterson~\cite{Fischer82b}, for a pair of transactions $T_1$ and $T_2$, any 
 system state is either \ti{bivalent} or \ti{univalent}.
A system state is \ti{bivalent} with respect to $T_1$ and $T_2$ if
there exist two valid executions that both 
 include that state, but end with opposite orderings of $T_1$ and $T_2$.
A system state is \ti{univalent} with respect to $T_1$ and $T_2$
otherwise: for one ordering of the 
 transactions, no valid execution ending with that ordering contains the 
 state.

We can define a similar relationship for start events:
for any pair of distinct start events $s_1$ and $s_2$, a system state is
 bivalent with respect to those events if it features in two valid
 executions, both of which have $s_1$ and
 $s_2$ in scheduled transactions, but those transactions are in opposite order.
A system state is univalent with respect to $s_1$ and $s_2$ otherwise.

All full executions (i.e., those starting with an empty state) that order a 
 pair of transactions begin in a \ti{bivalent} state with respect to their start
 events, before either is scheduled.
By our definition of serializability and transaction ordering, once
 transactions are ordered, they cannot be un-ordered.
Any execution that orders the transactions therefore ends in a \ti{univalent} 
 state with respect to their start events.
Any such execution consists of a sequence of 0 or more \ti{bivalent} 
 states followed by a sequence of \ti{univalent} states.
The event that is scheduled in the first \ti{univalent} state, in a sense, 
 decides the ordering of the transactions. 
We call it the \ti{decision event}.

We call any event in $T_1$ or $T_2$ that conflicts with an event in the other 
 transaction a \ti{conflicting event}.

\begin{lemma}[Decision Event \before\ Conflicting Events]\ \\
For any univalent state $S$ with $T_1\before T_2$, there exists a full
 execution $E$ ending in $S$ featuring a decision event $e_d$ that happens
 before (\before) all conflicting events in $T_1$ and $T_2$ (other than $e_d$
 itself, if $e_d$ is a conflicting event). 
\label{lemma:decision}
\end{lemma}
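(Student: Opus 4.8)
The plan is to begin from an arbitrary full execution ending in $S$ and to reorder its events so that the transition from bivalent to univalent states is pushed before every conflicting event. Since $S$ is reachable and $T_1 \before T_2$ already holds in $S$, some full execution $E_0$ ends in $S$, and by the discussion preceding the lemma $E_0$ has a well-defined decision event. First I would record a structural fact used throughout: for every conflicting pair $e_1 \in T_1$, $e_2 \in T_2$ that both occur in $S$, we must have $e_1 \before e_2$ (not $e_2 \before e_1$), since $e_2 \before e_1$ would give $T_2 \prec T_1$ and hence $T_2 \before T_1$, contradicting that happens-before is a strict partial order on transactions together with $T_1 \before T_2$. Thus every $T_1$-side conflicting event precedes its $T_2$-side partner. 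If $T_1 \before T_2$ is witnessed only by a chain of happens-before edges through events outside $T_1 \cup T_2$ (protocol or third-transaction events) and no events of $T_1$ and $T_2$ conflict, the conclusion is near-vacuous and it suffices to present any full execution ending in $S$.

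In the main case I would build $E$ as a linearization of the happens-before order on the events of $S$ that defers conflicting events as long as possible: at each step schedule some enabled event that is not a conflicting event if one exists, and schedule a conflicting event only when every enabled event is itself conflicting. Let $e_d$ be the first conflicting event this linearization reaches, $S_b$ the state immediately before it, and $S_u = S_b \cup \{e_d\}$. The argument then splits into three claims. (i) $S_b$ is bivalent: no event of $T_1$ or $T_2$ that conflicts with the other transaction has been scheduled, so the transactions' relative order is not yet forced by their direct interaction; using the model's guarantees that start events of any subset of the possible transactions may be scheduled at any time and that network delays are unconstrained, I would exhibit a valid continuation of $S_b$ (not passing through $S$) that orders $T_2$ before $T_1$, while $E$ itself continues $S_b$ to $S$, ordering $T_1$ before $T_2$. (ii) $S_u$ is univalent, so $e_d$ is the decision event of $E$: once $e_d$ and its conflict partner are both scheduled we have $e_d \before$ partner, so any continuation that orders the transactions at all orders $T_1$ before $T_2$; the delicate point is ruling out continuations that avoid the partner and order $T_2$ first through some other conflicting pair, for which I would use that $e_d$ was reached only after all non-conflicting events enabled so far were exhausted. (iii) $e_d \before$ every conflicting event in $T_1$ and $T_2$ other than itself: since $E$ reaches $e_d$ strictly before any other conflicting event, and every $T_1$-side conflicting event precedes its $T_2$-side partner, I would argue $e_d$ is a $\before$-least conflicting event of the pair, hence below all of them.

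The principal obstacle is that (iii) presupposes the conflicting events have a single $\before$-least element, which is not automatic (two conflicting events at different locations could be incomparable). I would handle this by one of two routes: show that a reachable state $S$ with $T_1 \before T_2$ cannot contain an incomparable pair of conflicting events without also containing a common causal ancestor that can be taken as, or shown to precede, the decision event; or, if a single minimum genuinely need not exist, take $e_d$ to be a common $\before$-predecessor of all conflicting events and rerun the bivalence/univalence analysis of (i)--(ii) at that ancestor, which then plays the role of the decision event. A secondary obstacle is making (i) fully rigorous: I must produce a concrete valid continuation of $S_b$ that orders $T_2$ before $T_1$ without contradicting any happens-before edge already present in $S_b$, which requires a careful argument about how $T_1$ and $T_2$ can be driven to interact from $S_b$ (directly through a conflicting pair, or indirectly through scheduling or protocol events).
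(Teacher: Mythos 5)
Your strategy is genuinely different from the paper's: you try to \emph{construct} the witnessing execution by a greedy linearization that defers conflicting events as long as possible, and then identify the first conflicting event scheduled as the decision event. The paper instead argues by contradiction and infinite descent: if every full execution ending in $S$ had some conflicting event $e_c$ that neither is nor happens after its decision event, then $e_c$ could be reordered to occur before that decision event, yielding an equivalent execution whose decision event is strictly earlier (because a scheduled conflicting event already makes the state univalent); iterating produces an impossible infinite descending chain, since states are finite and $\before$ is a strict partial order. The descent argument never needs to exhibit the decision event concretely or assume it is a conflicting event, and that is precisely where your construction runs into trouble.

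There are two genuine gaps. First, your claim (i) --- that the state $S_b$ just before the first conflicting event is bivalent --- is unsound. The ordering of $T_1$ and $T_2$ can be decided by events that are \emph{not} conflicting events of the two transactions: protocol events such as lock acquisitions (see \cref{fig:hospital-events}) belong to neither transaction, yet once one is scheduled every valid (protocol-consistent) continuation orders the transactions the same way. Your linearization schedules all such events before any conflicting event, so $S_b$ may already be univalent; then $e_d$ is not the decision event, the actual decision event is some earlier non-conflicting event, and its happens-before relation to the conflicting events is exactly what remains unproved. Second, the obstacle you flag for claim (iii) is real and neither fallback closes it: conflicting events at different locations can be $\before$-incomparable, so no least conflicting event need exist; and a common $\before$-predecessor of all of them (e.g., the start event of $T_1$) is typically scheduled while the state is still bivalent, so it cannot play the role of decision event in your (i)--(ii) analysis. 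To repair the proof you would essentially have to abandon the identification of the decision event with a particular constructed event and pass to a well-foundedness argument over equivalent executions, which is what the paper does.
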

\ifreport
\begin{proof}
\ICS{This Paragraph needs to be . . . better. Just better.}
Assume the contradiction.
Then for any full execution $E^\prime$ ending in $S$, an equivalent execution 
 exists featuring a state in which a conflicting event $e_c$ is scheduled, but 
 the decision event of $E^\prime$ is not.
Such an equivalent execution would by definition have a different decision 
 event, since $e_c$'s presence in a state makes the state univalent.
By our assumption, this equivalent execution has conflicting events that 
 neither are, nor happen after, its decision event.
This implies yet another equivalent execution with yet another state featuring 
 an even earlier conflicting event but not the decision event, and so on.
Since all states are finite sets, and \before is a strict partial order, this 
 infinite descending chain is impossible.
There must exist an execution $E$ ending with $S$ with decision event $e_d$ 
 that happens before all conflicting events in $T_1$ and $T_2$.
\end{proof}
\else
\begin{proof}[sketch]
We show that the contradiction implies an infinite chain of equivalent 
 executions with earlier and earlier non-decision conflicting events, which is 
 impossible given that system states are finite.
\end{proof}
\fi

We show that two fundamental system state properties are necessary for secure scheduling:
\begin{definition}[First-Precedes-Decision]
\label{def:first-precedes}
State $S$ satisfies \ti{First-Precedes-Decision} if, for any pair of
transactions $T_1$ and $T_2$ in $S$ with
 $T_1\before T_2$, there is a full execution $E$ ending in $S$ with a decision 
 event $e_d$ that either is in $T_1$, or happens after an event in $T_1$.
\end{definition}
\begin{definition}[Decision-Precedes-Second]
\label{def:precedes-second}
A state $S$ satisfies \ti{Decision-Precedes-Second} if, for any pair
of transactions $T_1$ and $T_2$ in $S$ with
 $T_1\before T_2$, there is a full execution $E'$ ending in $S$ with a
 decision event $e_d'$, such that no event in $T_2$ happens before 
 $e_d'$.
\end{definition}

Therefore, for a protocol to be \ProtocolSecurityAdjective, it must ensure
 resulting system states have these properties.

\begin{theorem}[Necessary Condition]
Any \ProtocolSecurityAdjective, dead\-lock-free protocol
 $p$ must ensure
 that all full executions consistent with $p$ feature only states
 satisfying both
 First-Precedes-Decision and Decision-Precedes-Second.
\label{theorem:necessary}
\end{theorem}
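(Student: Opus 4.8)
The plan is to prove both properties by contradiction: assume a \ProtocolSecurityAdjective, deadlock-free protocol $p$ (which, as throughout the paper, is meant to serialize the transactions it accepts) admits a reachable state $S$ violating one of the two properties, and derive that the set of executions consistent with $p$ fails \relaxedsecurity, contradicting \cref{def:secureprotocol}. I will treat the two properties separately, since Decision-Precedes-Second turns out to follow mostly from \cref{lemma:decision} together with serializability, whereas First-Precedes-Decision is where the security hypothesis does real work.

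For Decision-Precedes-Second: suppose $S$ contains $T_1\before T_2$ and violates the property, so every full execution ending in $S$ has a decision event with some $T_2$-predecessor. Apply \cref{lemma:decision} to obtain a full execution $E$ ending in $S$ whose decision event $e_d$ happens before every conflicting event of $T_1$ and $T_2$; by the violation there is some $e''\in T_2$ with $e''\before e_d$. If $T_1$ has any event $e_1$ conflicting with $T_2$, then $e''\before e_d\before e_1$ is a happens-before edge from a $T_2$-event to a $T_1$-event, so $T_2$ directly depends on $T_1$'s predecessor, forcing $T_2\before T_1$ and contradicting $T_1\before T_2$ by transitivity and irreflexivity of the transaction order (\cref{definition:serializability}). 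The only remaining case is the degenerate one where $T_1$ and $T_2$ share no conflicting events at all; then $T_1\before T_2$ is witnessed purely by a causal chain through non-conflicting (e.g.\ message) events, the ordering is not a scheduling choice, and an execution whose decision event lies on that chain exhibits the property. This sub-case needs some care but is not the crux.

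For First-Precedes-Decision: suppose $S$ contains $T_1\before T_2$ and every full execution ending in $S$ has a decision event that is neither in $T_1$ nor happens after any event of $T_1$. Use \cref{lemma:decision} to get $E$ ending in $S$ with decision event $e_d$ preceding all conflicting events of $T_1$ and $T_2$; by the violation, $e_d\notin T_1$ and no event of $T_1$ happens before $e_d$. Write $L$ for $e_d$'s location and $\ell_d$ for $\ell(e_d)$. The key observation is that $e_d$ is causally independent of $T_1$: any event co-located with $e_d$ and scheduled before it happens before it, and any $T_1$-event whose effects reached $L$ before $e_d$ would likewise happen before $e_d$; hence $L$'s local state at the moment $e_d$ is scheduled contains no event of $T_1$. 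Consequently, deleting from the prefix of $E$ up to and including $e_d$ all events of $T_1$ together with everything causally downstream of them leaves a valid execution prefix still consistent with $p$, in which $T_1$'s start event has not been scheduled and the state just before $e_d$ is still bivalent with respect to $T_1$ and $T_2$. From this prefix, deadlock-freedom and the requirement that any subset of start events may be scheduled at any time let me build two full executions consistent with $p$, both with the empty start state, agreeing on all NIEs visible at $\ell_d$ and on their $\ell_d$-projections through $e_d$, but ordering $T_1$ and $T_2$ oppositely: one by scheduling $e_d$ and then letting $T_1$ run (forced to $T_1\before T_2$, since $e_d$ is the decision event), the other by running $T_2$ to completion first and only then starting $T_1$ (forced to $T_2\before T_1$, since their conflicting events share a location). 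Since \cref{lemma:decision} places all conflicting events after $e_d$, and transaction-security (\cref{def:securetransaction}) together with the commit behaviour after $e_d$ make the relative order of those conflicting events visible at $\ell_d$, the two executions have distinct $\ell_d$-projections despite equal start states and equal $\ell_d$-visible NIEs, contradicting \relaxedsecurity.

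The main obstacle is the execution surgery in the First-Precedes-Decision case: verifying rigorously that removing $T_1$ and its causal descendants from the prefix yields a sequence still consistent with $p$ (every surviving non-input event is still exactly $p$'s output on the matching local state), that bivalence with respect to $T_1$ and $T_2$ genuinely survives the removal, and that the two resulting full executions remain $\ell_d$-indistinguishable up to the point where their conflicting events are scheduled so that the order disagreement becomes a real $\ell_d$-observable difference. The delicate bookkeeping is the interplay among \emph{visible-to} (\cref{definition:visible}), happens-before, co-location of conflicting events, and the per-location form of the protocol function; the rest — deadlock-freedom forcing each started transaction to finish, and the subset-of-start-events requirement licensing the delayed $T_1$ — is routine.
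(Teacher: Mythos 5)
Your treatment of Decision-Precedes-Second is essentially the paper's: apply \cref{lemma:decision} and observe that any $T_2$-event preceding $e_d$ would give $T_2\before T_1$, contradicting asymmetry of the transaction order. (The paper closes the non-conflicting sub-case you wave at by decomposing $T_1\before T_2$ into a chain $T_1\before T_3\before\dots\before T_n\before T_2$ of pairwise-conflicting transactions and applying the conflicting-case argument link by link; ``the ordering is not a scheduling choice'' is not yet an argument.)

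The First-Precedes-Decision half has a genuine gap. The paper derives this property from \emph{deadlock-freedom}, not from security: if no event of $T_1$ is or happens before $e_d$, then $e_d$ can be moved to the front of an equivalent execution, yielding a reachable state that contains no event of either transaction yet is already univalent for $T_1\before T_2$; since a protocol must accept any subset of start events, feed it only $T_2$'s start event from that state, and $T_2$ can never be scheduled---a deadlock. Your plan instead aims the contradiction at \relaxedsecurity, and that target cannot be hit. First, the two executions you construct schedule the start event of $T_1$ at very different points, and start events are NIEs; nothing makes that event invisible at $\ell_d$, so the premise $\NIE\p{E}\approx_{\ell_d}\NIE\p{E^\prime}$ of \relaxedsecurity generally fails and no violation follows. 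Second, the claim that the order of the conflicting events becomes ``visible at $\ell_d$'' is backwards: $e_d$ happens before those events, so the permitted flow is from $\ell_d$ \emph{up} to their label, and an $\ell_d$-observer need not see them at all. Third, and decisively, First-Precedes-Decision is not a consequence of security: a protocol that unconditionally schedules a decision event fixing $T_1\before T_2$ before receiving any input is a deterministic function of nothing, trivially satisfies \relaxedsecurity, and yet violates the property; the only hypothesis that excludes it is deadlock-freedom. You have the right ingredients on the table (causal independence of $e_d$ from $T_1$, the surgery removing $T_1$ from the prefix, the any-subset-of-starts requirement), but they must be pointed at liveness, as the paper does, not at noninterference.
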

\begin{proof}
Given $T_1\before T_2$, any execution $E^\prime$ ending in $S$ features a 
 decision event $e_d$.
Decision events for the same pair of transactions in equivalent executions
 must agree on ordering, by the definition of equivalent execution.
If $T_1$ does not contain $E$'s decision event, $e_d$, or any event
 that happens before $e_d$, then there exists an
 equivalent execution in which $e_d$ is scheduled before any events in $T_1$ or
 $T_2$.
This execution would imply the existence of a system state in which no event 
 in either transaction is scheduled, but it is impossible to schedule $T_2$ 
 before $T_1$, regardless of inputs after that state. 
If, after this state, the start event for $T_2$ were scheduled, but not the 
 start event for $T_1$, then $T_2$ cannot be scheduled.
This contradicts a the deadlock-freedom requirement:
 no protocol should result in a system state in which 
 a supported transaction can never be scheduled.

Therefore some event in $T_1$ either is or happens before $e_d$ for some full 
 execution $E$ ending in $S$.

If $T_1$ and $T_2$ conflict, then $e_d^\prime$ either is an event in
 $T_1$ or happens before an event in $T_1$, by \cref{lemma:decision}.
If an event $e_2\in T_2$ happens before $e_d^\prime$, then either 
 $e_d^\prime\in T_1$, and
\[
e_2\before e_d^\prime\Rightarrow T_2\before T_1
\]
which is impossible, by the definition of happens-before, or \\${\exists e_1\in 
 T_1.  e_d^\prime\before e_1}$, and 
\[
e_2\before e_d^\prime\before e_1 \Rightarrow e_2\before e_1\Rightarrow T_2\before T_1
\]
which is also impossible, by the definition of happens-before.

If $T_1$ and $T_2$ do not conflict, then the only way $T_1\before T_2$ implies 
 that there exists some chain \ifreport \\ \fi $T_1\before T_3\before T_4\before \dots\before 
 T_n\before T_2$ such that and each transaction in the chain conflicts with 
 the next.
Therefore, by the above proof, an equivalent execution exists in which each 
 transaction in the chain contains the decision event for ordering itself and 
 the following transaction, and no events in the following transaction are 
 before that decision event.

Therefore there exists some equivalent execution $E^\prime$ in which no event 
 in $T_2$ happens before the decision event $e_d^\prime$ deciding the ordering 
 between $T_1$ and $T_2$. 
\end{proof} 

Although \cref{theorem:necessary} may seem trivial, it represents some 
 important conclusions: 
No protocol can make any final ordering decision until at least one 
 transaction involved has begun.
Furthermore, it is impossible for the later transaction to determine the decision.
Truly atomic transactions cannot include any kind of two-way interaction or 
 negotiation for scheduling.

\section{The Staged Commit Protocol}
\label{sec:protocols}
\label{sec:stagedcommit}

We now present the staged commit protocol (SC) and prove that it is
 \ProtocolSecurityAdjective, given transactions satisfying
 relaxed monotonicity.

SC is a hybrid of traditional serialization protocols,
 such as 2PC, and the simple pessimistic protocol described in
 the proof of \cref{theorem:monotonicity}. 
Compared to our simple pessimistic protocol, it allows a broader variety of
 transactions to be scheduled (relaxed monotonicity vs. regular monotonicity),
 which in turn allows more concurrency. 
A transaction is divided into \ti{stages}, each of which can be securely committed 
 using a more traditional protocol.
The stages themselves are executed in a pessimistic sequence.

Each event scheduled is considered to be either \ti{precommitted} or 
 \ti{committed}.
We express this in our model by the presence or absence of an ``isCommitted'' 
 event corresponding to every event in a transaction.
Intuitively, a \ti{precommitted} event is part of some ongoing transaction, so 
 no conflicting events that happen after a precommitted event should be 
 scheduled.
A \ti{committed} event, on the other hand, is part of a completed
 transaction;
 conflicting events that happen after a committed event can safely be scheduled.
Once an event is precommitted, it can never be un-scheduled. 
It can only change to being committed.
Once an event is committed, it can never change back to being precommitted.

\begin{itemize}
\item The events of each transaction are divided into stages.
      Each stage will be scheduled using traditional 2PC, so aborts
      within a
       stage will be sent to all locations involved in that stage.
      
      To divide the events into stages, we establish equivalence classes of
       the events' labels.
      Labels within each class are \ti{equivalent} in the following
      sense: when events with equivalent labels are aborted, those
      aborts can securely flow to the same set of locations.
      \JED{This raises the question of how one determines whether an
      abort can securely flow to a location. In the implementation,
      this is enforced by the pc constraint.}
      \JEDreply{Also, we could use an example here.}
      An event's abort can always flow to the event's own location, so
       locations involved in a stage can securely ensure the atomicity
       of the events in that stage.
      Since conflicting events have the same security labels, they will be in
       the same equivalence class.
      We call these equivalence classes \ti{conflict labels} (\texttt{cl}).

\item Each stage features events of the same conflict label, and is scheduled
       with 2PC.
      One location must coordinate the 2PC.
      All potential aborts in the stage must flow to the coordinator,
      and some events on
       the coordinator must be permitted to affect all events in the stage.
      Relaxed monotonicity implies that at least one such location
       exists for each conflict label.

      When a stage tries to schedule an event, but finds a precommitted
       conflicting event, it aborts the entire stage.
      Because conflicting events have the same label, these aborts
       cannot affect events on unpermitted locations.
      
      When a stage's 2PC completes, the events in the stage are scheduled, and
       considered \ti{precommitted}.

\item Each transaction precommits its stages as they occur.
      To avoid deadlock, we must ensure that whenever two transactions
       feature stages with equal conflict labels, they precommit those stages in
       the same order.
      Therefore, the staged commit protocol assumes an ordering of
       conflict labels. This can be any arbitrary ordering, so long as
       (1) it totally orders the conflict labels appearing in each
       transaction, and (2) all transactions agree on the ordering.


\item When all stages are precommitted, all events in the transaction can be
       committed.
      \ti{Commit} messages to this effect are sent between locations, backwards
       through the stages.
      Whenever an event in one stage triggers an event in the next, the
       locations involved can be sure a \ti{commit} message will take the
       reverse path.
      The only information conveyed is timing.
\end{itemize}

Because events in a precommitted stage cannot be
 un-scheduled or ``rolled back'',
a participant that is involved only in an earlier stage is prevented from gleaning
 any information about later stages.
The participant will only learn, eventually, that it can commit.


{\color{blue}\Patsy}'s transaction in \cref{fig:secure-hospital-positive} has at
 least two stages when the patient has HIV:
\begin{enumerate}
\item {\color{blue}\Patsy} begins the transaction ({\color{blue}\Patsy\ start}),
        and reads the address ({\color{blue}Read Address}).
      This stage will be atomically precommitted, and this precommit process
       will determine the relative ordering of {\color{blue}\Patsy}'s
       transaction and {\color{red}\Attacker}'s, independent of more secret events.
\item {\color{blue}\Patsy} finds that the patient has HIV ({\color{blue}Read
       HIV}), and prints the patient's address ({\color{blue}Print address}).
\end{enumerate}

\begin{theorem}[Security of SC]
\label{theorem:stagedcommit}
Any set of transactions satisfying relaxed
 monotonicity are serialized by SC securely without deadlock.
\end{theorem}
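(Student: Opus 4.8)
The plan is to discharge the three obligations in the statement separately: SC is (i) deadlock-free, (ii) serializing, and (iii) \ProtocolSecurityAdjective\ (its executions satisfy \relaxedsecurity, \cref{def:secureprotocol}) on any set of transactions satisfying relaxed monotonicity (\cref{definition:relaxedmonotonicity}). The one structural fact that drives everything is that relaxed monotonicity forces the events of each conflict label \texttt{cl} to form a contiguous block within each transaction that is ``downstream'' of some location --- every \texttt{cl}-event is visible (\cref{definition:visible}) to that location and that location has an event permitted to influence every \texttt{cl}-event --- so each stage can be coordinated, precommitted, and later committed by ordinary 2PC with no cross-stage rollback.

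\textbf{Deadlock freedom.} First I would show that each stage's internal 2PC terminates. For a conflict label \texttt{cl} appearing in a transaction, relaxed monotonicity supplies a location $L$ to which every \texttt{cl}-event is visible and from which some event influences every \texttt{cl}-event; $L$ serves as coordinator, and since all aborts of \texttt{cl}-events share label \texttt{cl} they may flow to $L$ and the other stage locations. A stage fails only when it meets a precommitted conflicting event, whereupon it waits for that transaction to commit and retries. I would then argue the wait-for graph is acyclic: since stages are precommitted in the fixed global order on conflict labels and precommitted events are never un-scheduled, a cycle would force two transactions each to hold a precommitted stage at one conflict label while waiting on the other's stage at a smaller label, in opposite directions --- impossible because the order is total on each transaction's labels and agreed by all. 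With finitely many stages per transaction, every started transaction's stages are eventually precommitted and then committed.

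\textbf{Serializability.} Suppose $T_1$ and $T_2$ conflict via $e_1\in T_1$, $e_2\in T_2$ sharing conflict label \texttt{cl}. Stage-level 2PC together with the ``abort on precommitted conflict'' rule forces one \texttt{cl}-stage, say $T_1$'s, to precommit first; $T_2$'s \texttt{cl}-stage then precommits only after $T_1$'s \texttt{cl}-stage \emph{commits}, and since commit messages flow backward through all stages, that happens only after every stage of $T_1$ is precommitted, i.e.\ after all of $T_1$'s events are scheduled. Hence every event of $T_1$ happens before $e_2$ (the causal chain running through the commit of $e_1$ and the retry of $T_2$'s \texttt{cl}-stage), and likewise before every conflicting event of $T_2$. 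The same reasoning as in the deadlock argument shows $T_1$ ``wins'' consistently at every shared conflict label, and since conflicts capture all direct cross-transaction interaction (\cref{sec:conflict}) no event of $T_2$ happens before any event of $T_1$. Extending along conflict chains exactly as in the proof of \cref{theorem:necessary}, happens-before on transactions is contained in the acyclic order of first-stage precommits, hence a strict partial order: the transactions are serializable.

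\textbf{Security.} This is the step I expect to be the main obstacle. The goal: for executions $E,E'$ consistent with SC from $\ell$-equivalent states with $\ell$-equivalent NIEs, $E\approx_\ell E'$. I would classify every event as an NIE, a transaction event, or a protocol event (a 2PC message inside a stage, a \ti{commit} message between stages, or an \ti{abort} inside a stage), then show by induction on the execution that the subsequence of $\ell$-observable events is a deterministic function of the $\ell$-observable start state and $\ell$-observable NIEs --- which, being equal in $E$ and $E'$, yields $E\approx_\ell E'$ as in \cref{sec:limited}. NIEs agree by hypothesis. Transaction events are handled by \TransactionSecurity\ (\cref{def:securetransaction}): within a transaction $e\before e'$ implies $\ell\p{e}\less\ell\p{e'}$, so the $\ell$-observable events a transaction produces depend only on its $\ell$-observable causal predecessors. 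Inside a \texttt{cl}-stage every location sees every \texttt{cl}-event, so the 2PC handshake is a deterministic function of those events and of whether a conflicting precommit exists at \texttt{cl}; since \ti{abort} and the 2PC messages all carry the conflict label \texttt{cl}, they are $\ell$-observable exactly when \texttt{cl} flows to $\ell$, which is exactly when the conflicting \texttt{cl}-events are themselves $\ell$-observable --- so every observed abort is ``explained'' by observed events and leaks nothing. \ti{Commit} messages are deterministic because precommitted stages never roll back: each recipient, having itself sent a message in the protocol, is guaranteed exactly one, and $\ell$-observable \ti{commit} messages are determined by $\ell$-observable precommit events. This closes the induction, giving \relaxedsecurity; combined with the first two parts, SC serializes any relaxed-monotonic set securely and without deadlock.
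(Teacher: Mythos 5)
Your proposal is correct and takes essentially the same route as the paper's proof: the same three-way decomposition, the same key observation that an event commits only after every event of its transaction is precommitted (which the paper isolates as a ``Precommitted Snapshot'' lemma to get serializability), the same consistent-stage-ordering-plus-2PC argument for deadlock freedom, and the same argument that all $\ell$-observable events are deterministic functions of the $\ell$-observable state and NIEs for \relaxedsecurity. No gaps worth flagging.
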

\ifreport
\begin{proof}\ \\
\else
\begin{proof}[sketch]
\fi
\noinunbo{Security}
SC preserves \relaxedsecurity.
Intuitively, any information flows that it adds are already included in the
 transaction.

SC adds no communication affecting security:
\begin{itemize}
\item Communication within each stage is strictly about events that all 
 participants can both observe.
\item For each pair of consecutive stages, at least one participant
from the first stage can notify a participant in the 
 second stage securely, when it is time for the second stage to begin.
Relaxed monotonicity ensures the second stage contains an event that happens
 after an event in the first stage, representing a line of communication.
\item Communication for commits can safely proceed in reverse order of stages.
\ifreport
Within each stage, each participant can securely forward a commit message to 
 all other participants.
Between stages, commit messages can be sent back along the same channels used 
 to notify each stage the previous one had precommitted.
\fi
Each participant knows when it precommits exactly which commit messages it 
 will receive.
\ifreport
The commit messages themselves do not leak any information (other than timing) 
 to their recipients.
\fi
\end{itemize}
\ifreport
Therefore SC adds no unauthorized information flows.

Specifically, for any given participant's label $\ell$, events within a stage 
 visible to $\ell$  are scheduled deterministically based only on information 
 visible to $\ell$.
Commit messages (and affiliated events) for visible stages arrive eventually, 
 at a time determined by network delay events, which we consider input.
Other stages' events are not observable to $\ell$.

Therefore, for any two executions beginning with states indistinguishable to 
 $\ell$, with NIEs visible to $\ell$, all scheduled events visible to $\ell$ 
 would be indistinguishable.
Thus \relaxedsecurity is preserved.
\fi

\noinunbo{Serializability}
\ifreport
Any set of transactions with relaxed monotonicity scheduled by SC
 will be serializable.
\begin{lemma}[Precommitted Snapshot]\ \\
\label{lemma:precommittedsnapshot}
Any
\else
Our proof is built around the following lemma: any
\fi
execution in which an event in a transaction is committed features a
 system state in which all events in the transaction are precommitted.
\ifreport
\end{lemma}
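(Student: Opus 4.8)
The plan is to argue directly from the definition of SC, extracting one protocol invariant and then locating an explicit cut point in the execution. Fix an execution $E$ consistent with SC, a transaction $T$, and an event $e\in T$ whose \texttt{isCommitted} event appears somewhere in $E$. The invariant I would establish is: SC schedules an \texttt{isCommitted} event for \emph{any} event of $T$ only after every event of $T$ has already been scheduled and precommitted. This is read off from the protocol in two steps. First, a stage of $T$ becomes precommitted precisely when all of its events have been scheduled (as precommitted), and the stages of $T$ are precommitted in the fixed, globally agreed conflict-label order; hence once the last stage of $T$ is precommitted, every stage---and therefore every event---of $T$ is precommitted. Second, the commit phase of $T$ (the backward pass of \ti{commit} messages through the stages, which is what causes \texttt{isCommitted} events to be scheduled) begins only after all stages of $T$ are precommitted. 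Composing the two gives the invariant.

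Given the invariant, let $\hat e_0$ be the first \texttt{isCommitted} event for any event of $T$ in the sequence $E$ (it exists by hypothesis), and let $S$ be the state immediately preceding $\hat e_0$, i.e.\ the prefix of $E$ up to but not including $\hat e_0$. By minimality of $\hat e_0$, no \texttt{isCommitted} event for an event of $T$ lies in $S$, so no event of $T$ is committed in $S$. By the invariant, every event of $T$ has been scheduled before $\hat e_0$; since a scheduled event is never un-scheduled and a precommitted event never reverts once scheduled, every event of $T$ is present in $S$ and---being scheduled but not committed---is precommitted in $S$. Thus $S$ is the required state.

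The main obstacle is justifying the ``early'' direction of the invariant: that no location can schedule an \texttt{isCommitted} event before the whole transaction has precommitted. One has to check from the staged-commit description that the commit pass is strictly posterior to the precommit of the final stage on \emph{every} participating location, and that, because stages are committed in a globally agreed order on conflict labels, ``the final stage is precommitted'' really does entail ``all stages are precommitted''. A minor but worth-stating point is that a ``system state'' here is simply a prefix of the linearly ordered execution, so exhibiting ``a state in which all events of $T$ are precommitted'' reduces to locating the correct cut point in $E$; no appeal to an equivalent execution is needed.
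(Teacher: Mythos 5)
Your proposal is correct and follows essentially the same route as the paper's proof: both observe that stages precommit in the fixed total order, that the commit pass only begins after the final stage has precommitted, and then exhibit the state at the cut point where everything is precommitted but nothing is yet committed. Your version merely makes explicit the choice of cut point (the state just before the first \texttt{isCommitted} event) that the paper leaves implicit.
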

\begin{proof}
Stages are totally ordered, and each waits until the final stage commits 
 before $\p\before$ any of its events commit.
The final stage precommits before $\p\before$ it commits, and so there is 
 a system state in which all events in the transaction are precommitted.
\end{proof}

Let $E$ be an execution where any two conflicting transactions $T_1$ and $T_2$ 
 both have at least one event that commits.
Given \cref{lemma:precommittedsnapshot}, $E$ must feature two states:
 one in which all events in $T_1$ are precommitted, and another in which all 
 events of $T_2$ are precommitted.
As $T_1$ and $T_2$ conflict, these states cannot be the identical. 
(An event is never scheduled while a conflicting event is precommitted.)

One transaction must be scheduled before $\p\before$ the other.
Without loss of generality, let it be $T_1$.
No equivalent execution can feature a state in which an event in $T_2$ is 
 scheduled before an event in $T_1$, as this would require a conflicting event 
 in $T_2$ to be precommitted before its corresponding conflicting event in 
 $T_1$ is committed. 
The corresponding conflicting event in $T_1$ must be precommitted before any 
 event in $T_1$ commits, and we require that all events in $q_2$ remain 
 precommitted until after an event in $T_1$ commits. 

Therefore, if $T_1\before T_2$ then it is impossible for $T_2\before T_1$.
Thus 
\else
This lemma is used to show that
\fi
SC guarantees a strict partial order of transactions, and
 therefore serializability.

\noinunbo{Deadlock Freedom}
\ifreport

A deadlock can occur only if there is a cycle of dependencies among
transactions, in which transaction $T_1$ 
 \ti{depends on} $T_2$ if and only if $T_2$ has precommitted an event 
 conflicting with an unscheduled event in $T_1$.

Conflicting events share labels, and stages are defined by labels. 
All transactions must therefore order the stages of conflicting pairs in the same way.
One event can only ever depend on an event in its own or in a prior stage.
Stages are precommitted in order, so no dependency cycle featuring events in 
 different stages is possible.

Each stage is precommitted atomically using 2PC.
2PC preserves deadlock freedom, meaning no cycle featuring only events in the 
 same stage is possible.

Therefore no cycles, and thus no deadlock, can exist with SC.

SC is secure, deadlock-free, and guarantees serializability when
 the transactions have relaxed monotonicity.
\end{proof}
\else
Deadlock cannot form within any stage, since stages use 2PC, 
 which preserves deadlock freedom.
The stages themselves, like locks in our
proof of \cref{theorem:monotonicity},
are precommitted
 in a consistent order, guaranteeing deadlock freedom.
\end{proof}

\fi

\subsubsection*{The Importance of Optimism}

SC specifies only a commit protocol.
Actual computation (which generates the set of events) for each
transaction can be done
 in advance, optimistically.
 \ACM{Feel like we could use some citations to optimistic concurrency
 control here and, if possible, citations to some more recent DB
 literature making the point that optimism is a good idea for
 performance of distributed transactions.}
\ICSreply{agreed}
If one stage precommits and the next is blocked by a
 conflicting transaction, optimistically precomputed events would have to be 
 \ti{rolled back}.
However, no precommitted event need be rolled back.
In fact, it would be insecure to do so.
Thus SC allows for partially optimistic transactions with partial
 rollback.

Our model requires only that a transaction be a set of events.
In many cases, however, it is not possible to know which transaction will run
 when a start event is scheduled.
For example, a transaction might read a customer's banking information from a 
 database and contact the appropriate bank.
It would not be possible to know which bank should have an event in the 
 transaction beforehand.
If a system attempted to read the banking information prior to the 
 transaction, then serializability is lost: the customer might change banks in 
 between the read and the transaction, and so one might contact the wrong bank.

Optimism solves this problem: 
events are precomputed, and when an entire stage is completed, that stage's 
 2PC begins.
This means that optimism is not just an optimization; it is required for
 secure scheduling in cases where the transactions' events are not known
 in advance.

\section{Implementation}
\label{sec:implementation}
We extended the Fabric language and compiler to check that
transactions can be securely scheduled, and we extended the Fabric
runtime system to use SC.
Fabric and IFDB~\cite{schultz2013} are the two open-source systems we are aware
 of that support distributed transactions on persistent, labeled data with
 information flow control.
Of these, we chose Fabric for its static reasoning capabilities.
IFDB checks labels entirely dynamically, so it cannot tell if a 
 transaction is schedulable until after it has begun.

\subsection{The Fabric Language}
The Fabric language is designed for writing distributed programs using atomic 
 transactions that operate on persistent, Java-like objects~\cite{fabric09}.
It has types that label each object field with information flow
 policies for confidentiality and integrity.
The compiler uses these labels to check that Fabric programs enforce a
 noninterference property.
However, like all modern systems built using 2PC, Fabric does not require that 
 transactions be securely scheduled according to the policies in the program.
Consequently, until now, abort channels have existed in Fabric.
\ACM{Can we soften this a little? Presumably only multi-store
transactions have been subject to abort channels. So it has been
possible to write transactions that create abort channels, certainly.
Do we know of any actual programs that have abort channels?}
\ICSreply{The hospital example is a single-store transaction featuring an abort
 channel.
As for pre-existing examples, I want to say the travel agency one had something
 where one airline could leak some bits to another?}
 \TMreply{Not sure if it would be repetitive or helpful but we could
 point out the hospital example here as evidence.  Would that help
 Andrew?}

We leverage these security labels and extend the compiler to additionally 
 check that transactions in a Fabric program are monotonic 
 (\cref{sec:analysis}).
This implementation prevents confidentiality breaches
 via abort channels. Preventing integrity breaches would require
 further dynamic checks, which we leave to future work.
\TM{I wasn't a fan of the original wording and wrote this sentence as a
 replacement, but I'm not sure this is all that much better.  Here's the
 original in case we want to revert: We believe our implementation to be
 secure against confidentiality breaches, but have not ruled out all
 integrity breaches.}
\TMreply{Ugh, I feel like as a reader I want to be at least given a hint
 about what's going on with the integrity checks.}

%

\subsection{Checking Monotonicity}

Our modification to the Fabric compiler enforces relaxed monotonicity
 (\cref{definition:relaxedmonotonicity}).
\JED{Changed this. Used to say "strict monotonicity", which I don't
believe to be true.}
Our evaluation (\cref{sec:evaluation}) shows that enforcing this
 condition does not exclude realistic and desirable programs.
Our changes to the Fabric compiler and related files include
 4.1k lines of code (out of roughly 59k lines).

\subsubsection{Events and Conflict Labels in Fabric}
\label{sec:conflictlabels}
The events in the system model~(\cref{sec:system}) are represented in
our implementation by read and writes on
 fields of persistent Fabric objects.
The label of the field being read or written
 corresponds to the event labels in our model.

SC~(\cref{sec:stagedcommit}) divides events
into stages based on conflict labels (\texttt{cl}).
In our implementation, we define the \texttt{cl}
\TM{Thought: do we want to call these conflict classes (or some other
 term without the word label) to make the distinction between event
 label and these classes clearer?}
of an event $e$ to correspond to the set of \ti{principals} authorized to
read or write the field that is being accessed by $e$.
If $e$ is a write event, this set contains exactly those principals
that can perform a conflicting operation (and thereby receive an
abort);
if $e$ is a read event,
the set is a conservative over-approximation, since only the writers can
conflict.

\cref{fig:blog} presents a program in which Carol schedules two
events within a single transaction. First, she
reads a blog post with security label $\ell$. Second, she writes a
comment (whose content depends on that of the post) with label
$\ell^\prime$.
Since $\ell$ permits Alice, Bob, or Carol to read the post,
the \texttt{cl} of the first event
 includes all three principals.
However, only Alice and Carol can read or write the comment, so when Carol goes
 to write it, only Alice or another transaction acting on behalf of Carol could
 cause conflicts.
The \texttt{cl} of the write therefore includes only Alice and Carol.

\begin{figure}[t]
\centering
\lstset{
    literate={B}{{\ensuremath\bot}}1
             {L}{{\ensuremath\ell}}1
             {M}{{\ensuremath{\ell^\prime}}}1
             {H}{{\tb{PC}}}1
             {J}{{\tb{Possible conflictors}}}1
             {A}{{$\cb{Alice,Bob,Carol}$}}1
             {K}{{$\cb{Alice,Carol}$}}1
             {N}{{-}}1
    }
\begin{lstlisting}[gobble=2]
  atomic {                        H   J
    String{L} p = post.read();    B  A
    Comments{M} c;                B     N
    if (p.contains("fizz")) {     B     N
      c.write("buzz");            L   K
    if (p.contains("buzz")) {     B     N
      c.write("fizz");            L   K
    }
  }
\end{lstlisting}
\vspace{-3mm}
\caption{Carol's program in our Blog example:
         Carol reads a post with label $\ell$, and depending on what she 
          reads, writes a comment with label $\ell^\prime$.
         Label $\ell$ permits Alice, Bob, and Carol to read the post,
          while $\ell^\prime$ keeps the Comments more 
          private and allows only Alice and Carol to view or edit.}
\label{fig:blog}
\end{figure}

\subsubsection{Program Counter Label}

The \ti{program counter} label (\texttt{pc})~\cite{denning-book}
labels the program context. For
 any given point in the code,
the \texttt{pc} represents the join (least upper bound) of the labels of events
that determine
 whether or not execution reaches that point in the code. These events
 include those occurring in if-statement and loop conditionals.
For instance, in \cref{fig:blog}, whether line 5 runs depends on the value of
 \texttt p, which has label $\ell$.
Therefore, the fact that line 5 is executing is as secret as \texttt p, and the
 \texttt{pc} at line 5 is $\ell$.

SC requires that when events with the same
\texttt{cl} are aborted, those aborts can securely flow to the same
set of locations.
\JED{This requirement is buried in the protocol description, and the
reader surely has forgotten about it by now. Unsure how to help this.
Maybe give it a name?}
When an event causes an abort, the resulting abort messages carry
information about the context in which the event occurs.
Therefore, we enforce the requirement by introducing a
constraint on the program context in which events may occur: the
\texttt{pc} must flow to the principals in the conflict label.
\begin{equation}
  \label{eq:pc-constraint}
  \texttt{pc}\ \less\ \texttt{cl}
\end{equation}

Eliding the details of how Fabric's labels are structured, in \cref{fig:blog},
 $\bot$ flows to everything, and $\ell$, the label
 of the blog post, does flow to the conflict label, indicating that both Alice
 and Carol can cause a conflict.
Therefore, \cref{eq:pc-constraint} holds on lines 2, 5, and 7.

\subsubsection{Ordering Stages}
\label{sec:stages}
Each stage consists of operations with the same \texttt{cl}.
To ensure all transactions precommit conflicting stages in the same 
 order, we adopt a universal stage ordering:
\begin{equation}
  principals\p{\texttt{cl}_{i}}\supsetneq principals\p{\texttt{cl}_{i+1}}
\end{equation}
The set of principals in each stage must be a strict superset of the
 principals in the next one.
This ensures that unrestricted information can be read in one
 stage and sensitive information can be modified in a later stage in
 the same transaction.
In the hospital example (\cref{fig:secure-hospital}), {\color{blue}Read HIV} 
 has a conflict label that only includes trusted personnel, while
 {\color{blue}Read address} has a conflict label that includes more hospital staff.
As a result, our implementation requires that {\color{blue}Read
address} be staged before {\color{blue}Read
 HIV} in \Patsy's transaction.

In \cref{fig:blog}, our stage ordering means that the read on line 2, with a
 \texttt{cl} of $\cb{Alice, Bob, Carol}$ belongs in an earlier stage than the
 write, which features a \texttt{cl} of only $\{Alice, Carol\}$.

\subsubsection{Method Annotations}
\label{sec:method-annotations}
To ensure modular program analysis and compilation, each method
 is analyzed independently.
Fabric is an object-oriented language with dynamic dispatch, so it is not
 always possible to know in advance which method implementation a program will 
 execute.
Therefore, the exact conflict labels for events within a method call are not 
 known at compile time.
In order to ensure each atomic program can divide into monotonic stages, we 
 annotate each method with bounds on the conflict labels of operations within 
 the method.
These annotations are the security analogue of argument and return
types for methods.

\subsection{Implementing SC}
\label{sec:implementation-runtime}
We extended the Fabric runtime system to use SC
 instead of traditional 2PC,
modifying 2.4k lines of code out of a total of 24k lines
 of code in the original implementation.
Specifically, we changed Fabric's 2PC-based transaction protocol so that it 
 leaves each stage prepared until all stages are ready, and then commits.

Since Fabric labels can be dynamic, the compiler statically determines 
 \ti{potential stagepoints}---points in the program that may begin a new
 stage---along with the conflict labels of the stages immediately
 surrounding the potential stagepoint.
If the compiler cannot statically determine whether the conflict labels
 before and after a stagepoint will be different, it inserts a dynamic
 equivalence check for the two labels.
At run time, if the two labels are not equivalent, then a stage is
ending, and the system precommits all operations made thus far.
To precommit a stage, we run the first (``prepare'') phase of
 2PC.
If there is an abort, the stage is re-executed until it eventually
 precommits.

In \cref{fig:blog}, there is a potential stagepoint
 before lines 4 and 6, where the next operation in each case will not
 include Bob as a possible conflictor.
The conflict labels surrounding the potential stagepoint are
 $\cb{Alice,Bob,Carol}$ (from reading the post on line 2) and
 $\cb{Alice, Carol}$ (from writing the comment on either line 4 or 6).
If another transaction caused the first stage to abort, then Carol's code 
 would rerun up to line 4 or 6 until it could precommit, and then the
 remainder of the transaction would run.

\section{Evaluation}
\label{sec:evaluation}
To evaluate our implementation, we built three example Fabric applications, and 
 tested them using our modified Fabric compiler:
\begin{itemize}\itemsep 0in
  \item an implementation of the hospital example from 
         \cref{sec:abortchannels};
  \item a primitive blog application (from which \cref{fig:blog} was
         taken), in which participants write and
         comment on posts with privacy policies; and
  \item an implementation of the Rainforest example from
         \cref{sec:abortchannels}.
\end{itemize}

\subsection{Hospital}
We implemented the programs described in
 our hospital example (\cref{fig:insecure-hospital}).
In the implementation, \Patsy's code additionally appends the
 addresses of HIV-positive patients to a secure log. In a third
 program, another trusted participant reads the secure log.

With our changes, the compiler correctly rejects \Patsy's code.
We amended her code to reflect \cref{fig:secure-hospital}.
Of the 350 lines of code, we had to change a 
 total of 113 to satisfy relaxed monotonicity and compile.
Of these 113 lines, 23 were additional method annotations and the
 remaining 90 were the result of refactoring the transaction that
 retrieves the addresses of HIV-positive patients.
SC scheduled the transactions without leaking information.
The patient's HIV status made
 \Attacker\ neither more nor less likely to receive aborts.

\subsection{Blog}
In our primitive blog application, a store holds API objects,
 each of which features blog posts (represented as strings) with some security
 label, and comments with another security label.
These labels control who can view, edit, or add to the posts and comments.

In one of our programs, the blog owner atomically reads a post and
updates its text
to alternate between ``fizz'' and
 ``buzz''.
In another program, another user comments on the first post 
 (\cref{fig:blog}).
To keep this comment pertinent to the content of the post, reading the post 
 and adding the comment are done atomically.
Since posts and comments have different labels, this transaction has at least two
 stages:
one to read the post, and another to write the comment.

We were able to compile and run these programs with our modified system with 
 relatively few changes.
Of the 352 lines of code, we had to change a total of 50,
primarily by adding annotations
 to method signatures~(\cref{sec:method-annotations}).

\subsection{Rainforest}
\label{sec:evaluation-rainforest}
\begin{figure}
\centering
\begin{tabular}{|l|c|c|}\hline
  \multicolumn{1}{|c|}{\tb{Data item}} & \tb{Readers} & \tb{Writers} \\\hline
  Gloria's account balance             & Bank, Gloria & Bank \\\hline
  Item price                           & (public)     & Outel \\\hline
  Inventory                            & Outel        & Outel \\\hline
\end{tabular}
\caption{Example policies for the Rainforest application.}
\label{fig:rainforest-policies}
\end{figure}

We implemented the Rainforest example from \cref{sec:rainforest}.
In our code, two nodes within Rainforest act with Rainforest's
authority.
They perform transactions representing the orders of Gloria and Fred from
 \cref{fig:rainforest}.
Each transaction updates inventory data stored at one location, and banking 
 data stored at another.
\cref{fig:rainforest-policies} gives examples of the policies
for price, inventory, and banking data.

\newcommand{\resultstable}[0] {{
\begin{figure*}
\centering
\begin{tabular}{| r | l | c | c | c | c |}\hline
  \multicolumn{1}{|c|}{\multirow{2}{*}{\tb{Example}}}&
  \multicolumn{1}{|c|}{\multirow{2}{*}{\tb{Program}}}&
  \multicolumn{3}{|c|}{\tb{SC}}&
  \tb{2PC}\\ \cline{3-6}
  &&
  \tb{\# stages}&
  \tb{Dyn. checks}&
  \tb{Total time}&
  \tb{Total time}\\
  \hline
%
%

Hospital & \texttt{patsy}             & 3 & 0.45 ms & 9.17 ms & 6.38 ms \\\hline
\multirow{2}{*}{Blog} & \texttt{post} & 2 & 0.11 ms & 1.03 ms & 1.01 ms \\\cline{2-6}
& \texttt{comment}                    & 3 & 0.29 ms & 1.30 ms & 1.01 ms \\\hline
\end{tabular}
\caption{Performance overhead of SC. Reported times are
per-transaction averages, across
three 5-minute runs of the blog application and three
20-minute
runs of the hospital application. Relative standard error of all
measurements is less than 2\%.
}
\label{fig:overhead}
\end{figure*}
}}
\ifreport
\else
  \resultstable{}
\fi

While attempting to modify this code to work with SC, we discovered
that the staging order chosen in \cref{sec:stages} makes it impossible
to provide the atomicity of the original application while both
meeting its security requirements and ensuring deadlock freedom.

To illustrate, suppose Gloria is purchasing an item from Outel.
To ensure she is charged the correct price, the event that updates the inventory
 must share a transaction with the one that debits Gloria's bank account.
The conflict label for the inventory event corresponds to $\{\tt{Outel}\}$,
 whereas the conflict label for the debit event corresponds to $\{\tt{Bank},
 \tt{Gloria}\}$.
Since neither is a subset of the other, the compiler cannot
 put them in the same transaction.

These difficulties in porting the Rainforest application arise because
 Fabric is designed to be an open system, and so an \ti{a priori}
 choice of staging order must be chosen.
If the application were written as part of a closed system, deadlock freedom
 can be achieved by picking a staging order that works for this particular
 application (e.g., $\{\tt{Outel}\}$ before $\{\tt{Bank,Gloria}\}$),
 but it might be difficult to extend the system with future
 applications.

\subsection{Overhead}
\label{sec:overhead}

The staged commit protocol adds two main sources of overhead compared to
 traditional 2PC.
First, each stage involves a round trip to prepare the data manipulated
 during the stage, leading to overhead that scales with the number of
 stages and with network latency.
Second, as described in \cref{sec:implementation-runtime}, dynamic
labels result in potential stagepoints, which must be resolved using
run-time checks.
The number of checks performed depends on how
 well the compiler's static analysis predicts
 potential stagepoints.

We measured this overhead in our implementation on an
 Intel Core i7-2600 machine with 16 GiB of memory, using the transactions
 in our examples.
The post and comment transactions in the blog example were each run
continually for 15 minutes, and
 \Patsy's transaction in the hospital example was run continually
 for 1 hour.

\cref{fig:overhead} gives the overall execution times for both the
original system and the modified system. For the modified system, it
also shows the number of
stages for each transaction and the average time spent in
dynamic checks for resolving potential stagepoints.
The \texttt{comment} transaction in our
 experiments has one more stage than as described in \cref{fig:blog},
 because in all transactions, there is an initial stage performed
 to obtain the principals involved in the application.

By running the nodes on a single machine and using in-memory data storage, we
 maximize the fraction of the transaction run time occupied by dynamic checks.
Nevertheless, this fraction remains small.
\TM{Do we want to point to the hospital numbers as good evidence of this
since it's the largest transaction of the three?}
While the effective low latency of communication between nodes reduces the
 overhead due to communication round-trips for staging precommits, we report the
 number of stages, from which this overhead can be calculated for arbitrary
 latency.
\ifreport
  \resultstable{}
\fi

\section{Related work}
\label{sec:related}
Various goals for atomic transactions, such as
serializability~\cite{Papa79} and ACID~\cite{Haerder1983},
have long been proposed and widely studied, and
are still an
active research topic%
\ifreport~\cite{raz92,Kang1995,Smith1996,fabric09,Avni2015,Cerone2015}\fi.
While much of the recent interest has been focused on
 performance\ifreport~\cite{dragojevic2015,Lee2015,Wei2015,Aguilera2015,Zhang2015,Xie2015}\fi,
 we focus on security.

Information leaks in commonly used transaction scheduling protocols have been known for at least two
decades~\cite{Smith1996,atluri1996}.
Kang and Keefe~\cite{Kang1995} explore transaction processing in databases with multiple
 security levels.
Their work focuses on a simpler setting with a global, trusted transaction
 manager. They assume each transaction has a single security level, and can only
 ``read down'' and ``write up.''
%
Smith et al.~\cite{Smith1996} show that strong atomicity, isolation,
and consistency guarantees are not possible for all transactions in a
generalized multilevel secure database.
They propose weaker guarantees and give three different
 protocols that meet various weaker guarantees.
Their Low-Ready-Wait 2PL protocol is similar to SC,
 and provides only what the authors call ACIS$^-$--correctness.
Specifically, ``aborted operations at a higher level may prevent all lower level
 operations from beginning''~\cite[p37]{Smith1996}.
Although our implementation is conservative and would not allow such a thing,
 the theory behind SC could allow a later stage with less trustworthy
 participants to hold up earlier, precommitted stages indefinitely.
Duggan and Wu~\cite{duggan2011} observe that aborts in high-security subtransactions can leak
 information to low-security parent transactions.
Their model of a single, centralized multilevel secure database with
 strictly ordered security levels is more restrictive than our distributed model
 and security lattice. 
Our abort channels generalize their observation.
They arrive at a different solution, building a theory of secure nested
 transactions.
Atluri, Jajodia, and George~\cite{atluri2000} describe a number of known protocols requiring weaker
 guarantees or a single trusted coordinator.
Our work instead focuses on securely serializing transactions
in a fully decentralized setting.
Our analysis is also the first in this vein to consider liveness: SC can guarantee deadlock freedom of transactions with relaxed monotonicity.

In this work, we build on a body of research that uses lattice-based
information flow labels and language-based information flow
methods~\cite{denning-lattice,denning-cert,sm-jsac}.
Relatively little work has studied information flow in transactional
systems.  Our implementation is built on
Fabric~\cite{fabric09,fabric-release03}, a distributed programming
system that controls information flow over persistent objects.
The only other information-flow-sensitive database implementation
appears to be IFDB~\cite{schultz2013}, which also does not account for abort
channels.
\ACM{Hmm, what about those MLS databases? Seem to remember Jajodia,
Denning working on that.}
\ICSreply{most of those papers seem to have been theoretical.
I assume someone must have implemented an MLS at some point...}

\section{Conclusion}
\label{sec:conclusion}
There is a fundamental trade-off between strong consistency guarantees and 
 strong security properties in decentralized systems.
We investigate the secure scheduling of transactions, a ubiquitous building 
 block of modern large-scale applications.
Abort channels offer a stark example of an unexplored security flaw:
existing transaction scheduling mechanisms can leak confidential information, 
 or allow unauthorized influences of trusted data.
While some sets of transactions are impossible 
 to serialize securely, we demonstrate the viability of secure
 scheduling.
 \ACM{Too strong? Not clear we've really demonstrated it for "many" applications.}
 \ICSreply{Yeah, but "some applications" sounds super weak.}
 \JEDreply{better?}

We present relaxed monotonicity, a simple condition
under which secure scheduling is always possible.
Our staged commit protocol can securely schedule any set of transactions with relaxed 
 monotonicity, even in an open system.
To demonstrate the practical applicability of this protocol, we adapted 
 the Fabric compiler to check transactional programs for conditions that allow 
 secure scheduling.
These checks are effective: the compiler identifies an intrinsic security flaw
 in one program, and accepts other, secure transactions with minimal adaptations.

This work sheds light on the fundamentals of secure transactions.
However, there is more work to be done to understand the pragmatic implications.
We have identified separate necessary and sufficient conditions for secure 
 scheduling, but there remains space between them to explore.
\ACM{Is work needed on how to deal with problems like the one we ran
into in Rainforest?}
\ICSreply{Maybe?}
Ultimately, abort channels are just one instance of the general problem of 
 information leakage in distributed systems.
Similar channels may exist in other distributed settings,
\ACM{I think this can be more punchy. "Distributed settings" is very
mushy. Isn't the point that distributed protocols leak information and
there has not been much work on studying how to identify such leaks or
prevent them?}
and we expect it to be fruitful to explore other protocols through
 the lens of information flow analysis.


\ifacknowledgments
\section*{Acknowledgments}
The authors would like to thank the anonymous reviewers for their
suggestions.  This work was supported by MURI grant FA9550-12-1-0400,
by NSF grants 1513797, 1422544, 1601879, by gifts from Infosys and
Google, and by the Department of Defense (DoD) through the
National Defense Science \& Engineering Graduate Fellowship (NDSEG)
Program.
\fi

\ifreport
  \bibliographystyle{abbrvurl}
  \bibliography{../bibtex/pm-master}
\else
  \bibliographystyle{abbrv}
  \small
  \bibliography{shortened,../bibtex/pm-master}
\fi

\end{document}